\newfont{\go}{ygoth.tfm scaled 1200}  
\newtheorem{theorem}{Theorem}[section]
\newtheorem{lemma}[theorem]{Lemma}
\newcommand{\abs}[1]{\left| #1 \right|}
\newcommand{\vect}[1]{\bm{#1}}
\newcommand{\R}{\mathbb{R}}
\newcommand{\dir}{\displaystyle{\not} D}
\newcommand{\Ht}{{\mathbb{H}^3}}
\newcommand{\ket}[1]{\left | #1 \right \rangle}
\newcommand{\dif}[2]{\frac{d #1}{d #2}}
\newcommand{\pd}[2]{\frac{\partial#1}{\partial#2}}
\newcommand{\pdd}[2]{\frac{\partial^2 #1}{\partial #2 ^2}}
\newcommand{\db}[2]{ \left \{ {#1}, {#2} \right \} }
\newcommand{\hopt}{h_{\mathrm{opt.}}}
\newcommand{\gopt}{g_{\mathrm{opt.}}}
\numberwithin{equation}{section}
\def\ben{\begin{equation}}
\def\een{\end{equation}}
\def\bea{\begin{eqnarray}}
\def\eea{\end{eqnarray}}
\begin{document}

\hfuzz=100pt
\title{Universal properties of the near-horizon optical geometry}
\author{G W Gibbons\footnote{g.w.gibbons@damtp.cam.ac.uk}\ \ \& C M Warnick\footnote{c.m.warnick@damtp.cam.ac.uk}}
\date{}

\maketitle

\vspace{-.6cm}
{ \small
\centerline{DAMTP, University of Cambridge,}
\centerline{Wilberforce Road, Cambridge}
\centerline{CB3 0WA, UK}}

\vspace{-7cm}
\begin{flushright}
\small
DAMTP-2008-80
\end{flushright}
\vspace{6cm}

\begin{abstract}
We make use of the fact that the optical geometry near a static non-degenerate Killing horizon is asymptotically hyperbolic to investigate universal features of black hole physics. We show how the Gauss-Bonnet theorem allows certain lensing scenarios to be ruled in or out. We find rates for the loss of scalar, vector and fermionic `hair' as objects fall quasi-statically towards the horizon. In the process we find the Li\'enard-Wiechert potential for hyperbolic space and calculate the force between electrons mediated by neutrinos, extending the flat space result of Feinberg and Sucher. We use the enhanced conformal symmetry of the Schwarzschild and Reissner-Nordstr\"om backgrounds to re-derive the electrostatic field due to a point charge in a simple fashion.
\end{abstract}
\vspace{.3cm}

\section{Introduction}

There  has been over the past few years a very large amount of theoretical
work on black holes addressing  problems in quantum gravity, supergravity,
string theory and M-theory. Typically one seeks  
solutions of the supergravity equations  in four or higher
dimensions, and while many are broadly similar to to the well known
Kerr-Newman-de-Sitter  family  in four spacetime dimensions, there are
many differences of detail and in higher dimensions qualitatively 
different features can arise. It is desirable therefore to 
fix upon universal properties, true for a broad class of black holes.
For this reason the near horizon  geometry of  extreme black holes
has received a great deal  of attention, since it universally
behaves like $AdS_2 \times M_{n-2}$, where $M_{n-2}$ is typically an
$n-2$ dimensional  Einstein space and the symmetry is enhanced
from $\mathbb{R}$ to $SO(2,1)$.  By contrast the universal
near horizon optical geometry of non-extreme horizons 
with its enhanced conformal symmetry has 
largely been ignored (notable exceptions are \cite{Sachs:2001qb, Haba:2007mk}). 
In other words, little has been done
to exploit the fact that near a non-extreme horizon of a static black hole 
with metric
\ben
ds ^2 =- V^2dt ^2 + \gamma _{ij} dx ^i dx ^j\,,
\een
$i=1,2,\dots,n-1$     
the optical metric
\ben
a_{ij} dx ^i dx ^j = V^{-2} \gamma_{ij}dx^i dx ^j\,,
\een   
becomes asymptotically hyperbolic, with a conformal boundary
whose geometry is that of the event horizon.
In the spherically symmetric case, the limiting optical geometry is precisely
that of hyperbolic space $H^{n-1} = SO(n-1,1)/SO(n-1)$ 
with radius of curvature equal to $\kappa ^{-1}$, where $\kappa$ is
the surface gravity.

This is especially ironic because   asymptotically hyperbolic geometry
has been  studied for some time  because of  the light it throws on
the no-hair properties of 
asymptotically  de-Sitter metrics and the freezing of perturbations
which  have crossed the horizon of  an inflationary universe
(see  \cite{Gibbons:2007fd} for a recent discussion and references to earlier work).
A much better known case arises in the AdS/CFT correspondence
where the asymptotically hyperbolic  
geometry of $AdS_n$ or, in its ``Euclidean'' formulation, $H^n$ is  
of interest.   

The aim of the present paper is to fill this gap by
embarking on an exploration of what can be learned about
the universal qualitative properties of black holes 
from studying their  near horizon optical geometry
using the tools of hyperbolic geometry.
We shall principally  be concerned with the 
two topics

\begin{itemize}
\item A  qualitative study of   of null geodesics
near a static  horizon using the Gauss-Bonnet theorem, rather
in the style of \cite{Gibbons:1993cy} in the case of cosmic strings.

\item A study of the shedding of `hair' near static event horizons using propagators in hyperbolic space.

\end{itemize}

Of course, in the case of astrophysical black holes
the near horizon geometry has long been studied under the rubric
of the ``Membrane Paradigm'' \cite{Thorne:1986iy} and its Rindler like features
have been described. However this work, mainly concentrates on the planar
approximation to the horizon geometry   and does not make use of detailed
 concepts and ideas of hyperbolic geometry. Closer to what we are
interested in is the work of Haba \cite{Haba:2007mk} which considers scalar fields near a Killing horizon using an optical geometry approach and constructs approximate Green's functions in cases where the horizon is not necessarily spherical. This approach is more in tune with our philosophy of seeking universal properties. We will focus on spherical horizons and show that the enhanced symmetry present in this case make approximate propagators much simpler to construct. We remark that the universal nature of black hole absorption cross-section \cite{Das:1996we} has recently played an important r\^ole in the understanding of the ratio of shear viscosity to entropy density of conformal fluid in the AdS / CFT correspondence \cite{Policastro:2001yc}.

The paper will be organised as follows: we first define the optical
metric and explore some of its properties, including a study of light rays near an event horizon using the Gauss-Bonnet theorem. We will then present a
general argument based on the near horizon limit of the optical
geometry to estimate the rate of loss of `hair' as bodies fall towards
the black hole. Then we will show how the
optical metric allows one to find the fields due to static electric
and scalar charges in the Schwarzschild and Reissner-Nordstr\"om
backgrounds with very little calculation. This is a re-derivation of
results in the literature in a more coherent and direct way. We will
include equipotential plots for a charged particle approaching a black
hole, graphically demonstrating the no-hair result.

\section{Optical Metrics}

The optical metric may be thought of as the modern incarnation of an
idea dating back to Fermat in the 17th century. Fermat expressed the
laws governing reflection and refraction of light as what we would now
call an action principle. His `principle of least time' states that
the path taken by a ray of light is that which minimises the time
taken between the two points. This can be used to derive the more
familiar Snell's law and other optical laws.

In the case of light rays moving in a static background, with a given
choice of time coordinate $t$, we may take
this at face value and define the action for light rays in the metric
\begin{equation}
g = -a^2(\vect{x}) dt^2 +h_{ij}(\vect{x}) dx^i dx^j \label{static}
\end{equation}
to be
\begin{equation}
S = \int dt = \int \sqrt{a^{-2} h_{ij} \dif{x^i}{\lambda} \dif{x^j}{\lambda}}d\lambda,
\end{equation}
where we use the fact that null rays have $ds=0$. Extremizing this
action gives the unparameterised geodesics of the $3$ dimensional
Riemannian metric:
\begin{equation}
h_{\mathrm{opt.}} = a^{-2}(\vect{x}) h_{ij}(\vect{x}) dx^i dx^j.
\end{equation}
These unparameterised geodesics are the {\it light rays} and the
metric $\hopt$ is the {\it optical metric}. One may check that these
unparameterised geodesics indeed coincide with the projections of the
null geodesics of (\ref{static}) onto the spacelike surfaces
$t=\mathrm{const.}$ and so the light rays are the paths traced by
photons moving in this static space. The equivalence is clear by considering the
metric
\begin{equation}
g_{\mathrm{opt.}} = a^{-2} g = -dt^2 + \hopt;
\end{equation}
since the unparameterised null geodesics are conformally invariant
objects, the result follows. We will sometimes refer to the
ultra-static metric $g_{\mathrm{opt.}}$ as the optical metric also,
relying on context to distinguish it from $\hopt$. The optical metric
is not necessarily unique as it depends upon a choice of time
coordinate $t$. For metrics which admit more than one choice of $t$
there can be more than one optical metric. We shall see this in detail
in the case of anti-de Sitter space below.

It is not only statements about the null geodesics which are accessible via the optical metric. Many of the field equations of physics both classical and quantum behave well under conformal transformations and so we can make use of the universal nature of the near horizon optical geometry to study physics near a black hole (or cosmological) horizon.

\subsection{The Optical Metrics of de Sitter and anti-de Sitter}

\subsubsection{de Sitter}

We start with $3+1$ dimensional de Sitter as the timelike hyperboloid in
$\mathbb{E}^{4,1}$:
\begin{equation}
X^2 + Y^2 + Z^2 + W^2 - V^2 = 1, \qquad ds^2 = dX^2 + dY^2 + dZ^2 + dW^2 - dV^2.
\end{equation}
We could consider $n+1$ dimensions, but the generalisations are
straightforward. A choice of static time coordinate $t$ corresponds to a choice of
future-directed, timelike, hypersurface orthogonal Killing vector
$\pd{}{t}$. The Killing vectors of dS are those in $\mathbb{E}^{4,1}$
which generate rotations and boosts. A basis for the Killing vectors
is given by the hypersurface orthogonal vectors:
\begin{equation}
M_{\mu \nu} = X_{\mu}\pd{}{X^\nu} - X_{\nu}\pd{}{X^\mu}.
\end{equation}
Here $\mu, \nu$ are $\mathbb{E}^{4,1}$ indices. There is no Killing vector which is everywhere timelike, however the
Killing vector:
\begin{equation}
K = W\pd{}{V} + V\pd{}{W}
\end{equation}
is timelike and future directed in the region $\{W^2 -
V^2>0\}\cup\{W>0 \}$. Furthermore, any other choice of timelike
Killing vector is equivalent to $K$ under a Lorentz transformation. We can find a parameterisation of the
hyperboloid in this patch, such that $K = \pd{}{t}$ is a static
Killing vector as follows:
\begin{eqnarray}
X &=& r \sin \theta \sin \phi, \nonumber \\
Y &=& r \sin \theta \cos \phi, \nonumber \\
Z &=& r \cos \theta,   \nonumber\\
W &=& \sqrt{1-r^2} \cosh t, \nonumber\\
V &=& \sqrt{1-r^2} \sinh t.
\end{eqnarray}
On this patch, the metric takes the form
\begin{equation}
ds^2 = (1-r^2)\left (-dt^2 + \frac{dr^2}{(1-r^2)^2} + \frac{r^2}{1-r^2}(d\theta^2+\sin^2
\theta d\phi^2) \right),
\end{equation}
so that the optical metric may be seen to be the Beltrami metric on
Hyperbolic space. In fact these coordinates cover all of the Beltrami ball and so the
optical geometry of the static slicing of de Sitter is precisely
$\Ht$. The conformal infinity of the hyperbolic ball corresponds to
the Killing horizon on the hyperboloid at $W^2-V^2=0$ where $K$
becomes null. It is a general characteristic of Killing horizons that
the optical geometry approaches a constant negative curvature geometry
near the horizon.

\subsubsection{Anti-de Sitter}

The situation for AdS is somewhat more interesting than that for dS
because there exist three equivalence classes of timelike, future
directed, hypersurface orthogonal Killing vectors under the action of
$SO(3,2)$. To see this we take AdS to be a hyperboloid in
$\mathbb{E}^{3,2}$:
\begin{equation}
-W^2-V^2+X^2+Y^2+Z^2=-1, \qquad ds^2 = - dW^2 - dV^2+dX^2 + dY^2 + dZ^2.
\end{equation}
In a similar way to the case of dS, a basis for the Killing vectors is
given by:
\begin{equation}
M_{\mu \nu} = X_{\mu}\pd{}{X^\nu} - X_{\nu}\pd{}{X^\mu},
\end{equation}
where $\mu, \nu$ are $\mathbb{E}^{3,2}$ indices. Under a $SO(3,2)$
transformation, any Killing vector which is timelike somewhere on the hyperboloid
may be brought into one of three forms, listed below with the region
in which they are timelike:
\begin{equation}
\begin{array}{r c l l}
K_1 &=& V\pd{}{W}-W\pd{}{V}, \qquad &\mbox{all of AdS},  \\
K_2 &=& (Z+W)\pd{}{V} + V\left(\pd{}{Z}-\pd{}{W} \right), \qquad
&\{W+Z>0\}, \\
K_3 &=& Z\pd{}{V}+V\pd{}{Z}, \qquad &\{Z^2-V^2>0\} \cup \{Z>0\} .
\end{array}
\end{equation}
We now find the optical metric in each case:

{\bf Case 1} We pick the following parameterisation of the hyperboloid
\begin{eqnarray}
X &=& r \sin \theta \sin \phi, \nonumber \\
Y &=& r \sin \theta \cos \phi, \nonumber \\
Z &=& r \cos \theta,   \nonumber\\
W &=& \sqrt{1+r^2} \cos t, \nonumber\\
V &=& \sqrt{1+r^2} \sin t,
\end{eqnarray}
so that the metric is given by:
\begin{equation}
ds^2 = (1+r^2)\left (-dt^2 + \frac{dr^2}{(1+r^2)^2} + \frac{r^2}{1+r^2}(d\theta^2+\sin^2
\theta d\phi^2) \right), 
\end{equation}
and we recognise that the optical metric is the Beltrami metric for
$S^{3}$. This covers one half of the sphere, with the $2$-sphere at
$r=\infty$ corresponding to an equatorial $2$-sphere.

{\bf Case 2} We pick a different parameterisation for the hyperboloid:
\begin{eqnarray}
X &=& x/z, \nonumber\\
Y &=& y/z, \nonumber\\
Z &=& (1+t^2-x^2-y^2-z^2)/(2z), \nonumber \\
W &=& (1-t^2+x^2+y^2+z^2)/(2z), \nonumber \\
V &=& t/z,
\end{eqnarray}
so that the metric is the Poincar\'e upper half-space metric:
\begin{equation}
ds^2 = \frac{1}{z^2} \left(-dt^2+dx^2+dy^2+dz^2 \right), \qquad z>0
\end{equation}
and the optical geometry is the half-space $\{z>0\}$ in $\mathbb{E}^3$.

{\bf Case 3} Finally we consider the case where $\pd{}{t}=K_3$. A
suitable parameterisation of the static patch is provided by:
\begin{eqnarray}
X &=& \tan \theta \cos \phi, \nonumber\\
Y &=& \tan \theta \sin \phi, \nonumber\\
Z &=& \cosh t \hspace{.15cm}\mathrm{cosech} \chi \sec \theta, \nonumber\\
W &=& \mathrm{cotanh} \chi \sec \theta, \nonumber\\
V &=& \sinh t \hspace{.15cm} \mathrm{cosech} \chi \sec \theta.
\end{eqnarray}
The spatial coordinates have ranges $0\leq \phi < 2\pi$, $0\leq \theta <
\pi/2$, $0< \chi$. The metric is given by:
\begin{equation}
ds^2 = \mathrm{cosech}^2 \chi \sec^2 \theta \left( -dt^2 + d\chi^2 +
\sinh^2 \chi (d\theta^2+\sin^2 \theta d\phi^2)\right).
\end{equation}
We see that the optical metric is that of hyperbolic space in geodesic
polar coordinates. Since $\theta$ does not range over $[0, \pi)$ the
coordinates only cover half of $\Ht$ and there is a boundary which is
given by the plane $\theta = \pi/2$ in these coordinates.

We note that all three of the AdS optical metric have a finite
boundary. This boundary corresponds to the conformal infinity of the
AdS space and is a manifestation of the fact that AdS is not globally
hyperbolic. In the case of dS, the optical metric is complete and has
an asymptotically hyperbolic end which corresponds to the Killing
horizon of the static patch. As we will see below, this behaviour is
typical of a Killing horizon, such as the event horizons of
Schwarzschild and Reissner-Nordstr\"om.

\subsection{The Optical Metric of Schwarzschild and Reissner-Nordstr\"om}

Owing to the similarities between the cosmological horizon of de Sitter and the event horizons of black holes, we might expect the optical geometries to be similar near the horizon. We shall
see that this is indeed the case, and that near the horizon, the
geometry of a static black hole has an asymptotically hyperbolic optical
metric.

We start with the Schwarzschild metric
\begin{equation}
ds^2 = -\left(1-\frac{2M}{r}\right) dt^2 + \frac{dr^2}{1-\frac{2M}{r}} + r^2
(d\theta^2 + \sin^2 \theta d\phi^2) \label{schcoord}
\end{equation}
and make the coordinate transformation
\begin{equation}
r = M \left (\frac{1+\rho}{\rho}\right)
\end{equation}
this takes the asymptotically flat end to $\rho=0$ and the horizon to
$\rho = 1$ and puts the metric into the form:
\begin{equation}
ds^2 = \left(\frac{1-\rho}{1+\rho} \right) \left [ -dt^2 + 16 M^2
  \left(\frac{1+\rho}{2 \rho} \right)^4 \left\{
  \frac{d\rho^2}{(1-\rho^2)^2} + \frac{\rho^2}{1-\rho^2} \left(
  d\theta^2 + \sin^2\theta d \phi^2 \right) \right \} \right ]
\end{equation}
The term inside the braces may be seen to be the metric on $\Ht$ in
Beltrami coordinates. In the limit $\rho \to 1$, we thus see that the
optical metric tends to a metric of constant negative curvature as we
approach the horizon. 

The case of Reissner-Nordstr\"om is rather similar, although the
resulting metric is not so elegant. In the familiar coordinates, the
Reissner-Nordstr\"om metric is given by
\begin{equation}
ds^2 = -\left(1-\frac{2M}{r}+ \frac{Q^2}{r^2}\right) dt^2 + \frac{dr^2}{1-\frac{2M}{r}+ \frac{Q^2}{r^2}} + r^2
(d\theta^2 + \sin^2 \theta d\phi^2)
\end{equation}
In these coordinates, the horizon is at $r=M+\sqrt{M^2-Q^2} = M+\mu$,
where we define a new parameter $\mu$ which we will assume to be
strictly positive. The case $\mu=0$ corresponds to an extremal black hole which we
will not consider here. The coordinate transformation
\begin{equation}
r = M + \frac{\mu}{\rho}
\end{equation}
puts the metric into the form
\begin{equation}
ds^2 = \frac{\mu^2 (1-\rho^2)}{(\mu + m \rho)^2} \left [-dt^2 +
  \frac{(\mu+m)^4}{\mu^2} \left(\frac{\mu + m \rho}{(\mu+m)\rho}\right)^4\left\{
  \frac{d\rho^2}{(1-\rho^2)^2} + \frac{\rho^2}{1-\rho^2} \left(
  d\theta^2 + \sin^2\theta d \phi^2 \right) \right \} \right]
\end{equation}
We see once again that the optical metric approaches the Beltrami
metric on hyperbolic space as we get close to the horizon. In both
cases, the radius of the hyperbolic space is $\beta_H$, the inverse Hawking
temperature of the black hole. In fact, this is a general property of
a static metric with a non-degenerate Killing horizon as shown in
\cite{Sachs:2001qb}. In that paper Sachs and Solodukhin show that near a non-extreme horizon of a static black hole 
with metric
\ben
ds ^2 =- V^2dt ^2 + \gamma _{ij} dx ^i dx ^j\,,
\een
$i=1,2,\dots,n-1$     
the optical metric
\ben
a_{ij} dx ^i dx ^j = V^{-2} \gamma_{ij}dx^i dx ^j\,,
\een   
becomes asymptotically hyperbolic, with a conformal boundary
whose geometry is that of the event horizon. Essentially this is due to the fact that at a non degenerate Killing horizon $V^2$ must have a simple zero. We will consider the spherically symmetric case from here on, however we will try and identify results which we expect to remain the same in the case of more interesting (compact) horizon topology.

It will prove crucial in our exact calculations
later that the optical metrics of both Schwarzschild and Reissner-Nordstr\"om take the form:
\begin{equation}
g_{\mathrm{opt.}} = -dt^2 + H^4 h,
\end{equation}
where $h$ is the metric on the unit pseudo-sphere, $\Ht$, and $H$ is a
{\it harmonic} function on $\Ht$ which approaches $1$ near the
conformal boundary of $\Ht$. This observation is responsible for the fact that the fields due to
static electric and scalar charges in these backgrounds may be found
explicitly \cite{Copson:1928, Linet:1976sq, Leaute:1976sn}. We show below how these fields may be constructed. This special form of the metric occurs only in the $4$-dimensional space-times, so does not, unfortunately, lead to a generalisation of these results in an obvious way to higher dimensions.

For another viewpoint on the optical geometry of Schwarzschild see \cite{Abramowicz:2002qf} where the geometry is constructed as an embedding in a higher dimensional hyperbolic space.

\section{Lensing and The Gauss-Bonnet theorem}

In order to discuss null geodesics we could follow the well trodden path of solving the differential equations. Instead we will follow the approach of \cite{Gibbons:1993cy, Gibbons:2008rj} and extract information about geodesics using the Gauss-Bonnet theorem which directly involves the negative curvature of the optical metric. Although here we consider only the Schwarzschild black hole, it is clear that many qualitative properties may be deduced using only the assumption of negative curvature near the horizon of the optical metric, provided a totally geodesic $2$-surface exists.

Let us now consider geodesics lying  in an oriented 
two-surface $\Sigma$. We may apply the Gauss-Bonnet
theorem to obtain useful information \cite{Gibbons:1993cy}, in particular
about angle sums of geodesic triangles.  Let $D \subset \Sigma$ be
domain with Euler number $\chi(D) $ 
and a not necessarily connected 
boundary $\partial D$, possibly with corners at which the tangent
vector
of the boundary is discontinuous. If $K$ is the Gauss
curvature
of $D$, such that $R_{ijkl}=K(f_{ik}f_{jl}-f_{il}f_{jk} )
 $ and $k$ the curvature of $\partial D$, $\theta _i$ the angle
through which the tangent turns inwards at the i'th  corner  then
\ben 
\int_D  KdA + \oint _{\partial D} k dl + \sum _i \theta _i  = 2 \pi \chi(D)\,.
\een  

In the case of the Schwarzschild metric, if one 
considers geodesics in an equatorial plane
the optical metric is
\ben
ds ^2 = \frac{dr ^2 }{ (1-\frac{2M }{ r})^2}  + \frac{r^2 }{(1- \frac{2 M }{
    r} ) } d \phi ^2\,.  
\een
We return here to the standard Schwarzschild coordinates of (\ref{schcoord}). Note that the radial optical distance is 
\ben
\frac{ dr }{ (1-\frac{2M }{ r})}= dr^\star, 
\een
where $r^\star = r -2M+  2M \ln ( \frac{r  }{ 2M}-1)$ 
is the Regge-Wheeler tortoise coordinate. 

There is a circular geodesic at $r=3M$ and the horizon
$r=2M$ is at an infinite optical distance inside this
at $r^\star =-\infty$. The Gauss curvature 
\ben
K = -\frac{2M}{r^3} ( 1 -\frac{3 M }{ 2r})\,
\een
 is {\sl everywhere negative}. It falls to zero  like $-\frac{2 M }{ r^3}$
at infinity but near the horizon the Gauss-curvature approaches the
negative constant $-\frac{1 }{( 4M ) ^2}$. This is precisely as we
expect to find given the results of the previous section.

The fact that the Gauss curvature is {\sl negative}
looks on the face of it rather paradoxical, since one
usually thinks of gravitational fields as {\it focussing}
a bundle of light rays. However, as Lodge perhaps dimly realised \cite{Lodge}
a spherical vacuum gravitational field does not quite act in that way.
The equation of geodesic deviation governing the separation
 $\eta$ of two neighbouring light rays in the equatorial plane
is 
\ben
\frac{ d^2 \eta }{ dt ^2 } + K \eta =0\,.
\een        
Thus neighbouring light rays actually {\sl diverge}.
The focussing effect of a gravitational lens is not, as we shall see shortly,
a local but rather a global, indeed even topological, effect.

One might wonder whether the full 3-dimensional curvature 
$^{3}R_{ijkl} $ of the optical metric  has all of its sectional
curvatures negative, but this cannot be. The sectional curvature
of a surface is related to the full curvature tensor by
\ben
^{3}R_{ijkl}= K(f_{ik}f_{jl} -f_{1l}f_{jk} ) -K_{ik}K_{jl}+K_{il}g_{jk}\,,  
\een   
where $K_{ij}$ is the second fundamental form or extrinsic
curvature of the surface. For a totally geodesic surface
$K_{ij}=0$, and the two sectional curvatures agree.
One such totally geodesic surface is the equatorial
plane for which, as we have seen, $K$ is negative.
Another totally geodesic submanifold is the sphere at $r=3M$
for which $K $ is obviously positive.  

The negativity of the Gauss curvature of the optical metric
in the equatorial plane is a fairly {\it universal} property of black hole
metrics. To see this we note that if
\ben
ds ^2 = d \rho ^2 + l^2(\rho) d \phi ^2\,,
\een
then
\ben
K= - \frac{ 1 }{ l} \frac{ d^2 l }{ d \rho ^2 }\,.
\een
Any metric with the same qualitative features as the
Schwarzschild metric, as long as it has a {\sl positive} mass,  will
have $K$ negative. Indeed this fact might  be made the basis
of excluding negative mass objects observationally.    

A simple calculation shows the integral over the region
outside the circular geodesic at $r=3M$ is  
\ben
\int _{r\ge 3M} K dA = -2 \pi\,. \label{total}
\een

\begin{figure}
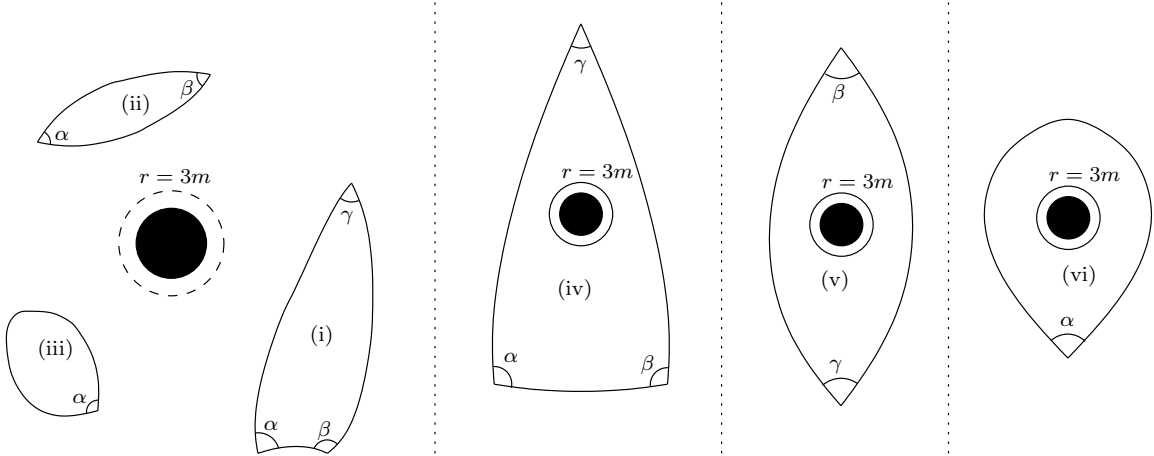

\centering {
\input{gb1.tex} \hspace{.5cm}
\input{gb2.tex}}
\caption{The geodesic polygons described in (i)-(vi)}
\end{figure}

Let us now apply the Gauss-Bonnet theorem to various cases.

\medskip \noindent (i)  Geodesic triangle $\Delta$ not containing the 
the region inside $r=3M$. In this case $\chi(\Delta)=1$.
If $\alpha, \beta, \gamma$ are the necessarily positive
internal angles, we find
that the angle sum is less that $\pi$,\ben
\alpha + \beta + \gamma = \pi+  \int _\Delta  KdA\, <  \pi \,. 
\een 

 \medskip \noindent (ii)  Geodesic di-gon  $ S$ not containing the 
the region inside $r=3M$. In this case $\chi(S)=1$.
 If $\alpha$ and $ \beta$ are the internal angles, 
\ben
\alpha + \beta  =   \int _S  KdA\, <  0 \,.
\een 
In other words two such geodesics cannot intersect  twice if the hole
is not inside the di-gon. Neither, in these circumstances,
 can a geodesic intersect itself because

\medskip \noindent (iii)  Geodesic loop  $T$ not containing the 
the region inside $r=3M$. In this case $\chi(T)=1$ and one finds
that if if $\alpha$ is the internal angles,the \ben
\alpha  = -\pi+  \int _T KdA\, < -\pi \,, 
\een 
which is plainly impossible.
 
This might seem counter-intuitive in the light of 
one's usual intuition about light bending,
but this feeling is dispelled
by considering cases in which the domain $D$ has two boundary
components,
the second, inner, one  being the circular geodesic at $r=3M$.
The domain with the circle removed has the topology
of an annulus and thus its Euler number vanishes.

\medskip \noindent (iv)  Geodesic triangle with hole  $\Delta _o$ 
enclosing the geodesic circle at $r=3M$ and 
with the region the region inside $r=3M$ removed.

If $\alpha, \beta, \gamma$ are the internal angles, we find
that the angle sum  is greater than $\pi$,\ben
\alpha + \beta + \gamma = 3\pi+  \int _{\Delta_0 }KdA  \ge \pi \,. 
\een 

Similarly

 \medskip \noindent (v)  Geodesic di-gon  $S_0$ with the 
the region inside $r=3M$ removed. In this case $\chi(S_0)=0$ and one finds
that if If $\alpha$ and $ \beta$ are the internal angles, 
then\ben
\alpha + \beta  =   2\pi + \int _{S_0}KdA\, >  0 \,.
\een 
In other words two such geodesics may  intersect  twice if the hole
is  inside the di-gon. Moreover, in these circumstances,
 a geodesic can intersect itself because:

\medskip \noindent (vi)  Geodesic loop  $T_0$  containing the 
the region inside $r=3M$. In this case $\chi(T _0)=1$ and 
if If $\alpha$ is the internal angle, we find
that \ben
\alpha  = \pi+  \int _{T _0}  K dA    \,, 
\een 
which is plainly possible.

Similar results may be obtained by considering geodesics inside
$r=3M$, but now domain must not contain the horizon,
otherwise $\int _D K dA$ will diverge.
Near the horizon the geometry is that of Lobachevsky space
with constant  curvature $- \frac{ 1 }{ 4M}$.
 
\medskip \noindent (vii) Deflection. We consider a geodesic line
with no self-intersection  which at
large distances, is radial. The angle between the asymptotes is $\delta$,
with the convention that it is positive if the light ray is bent towards
the hole. The geodesic decomposes the region inside two circles,
one  of very large radius and the other at $r=3M$ into
two domains $D_\pm$ whose common boundary component  consists of
the geodesic, which intersects the circle at infinity at right angles.
 We chose $D_+$ to enclose  the hole  so it has an inner boundary
component  at $r=3M$ and a portion of the circle at infinity through
which the angle $\phi$ has range $\pi-\delta $. 
Clearly $D_+$ is topologically an annulus and  so it has 
vanishing Euler number, $\chi(D_+)=0$.    
The other domain has Euler number $\chi(D_-)=1$,
 and $\phi$ ranges through
$\pi +\delta$. The Gauss-Bonnet formula applied to $D_\pm$ 
acquires a contribution
from  the two corners  and the circle at infinity.
The result is
\ben
\delta = -\int _{D_-} K dA >0.
\een
\begin{figure}
\centering {
\input{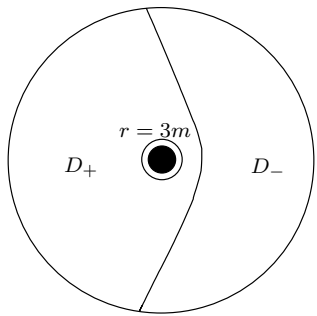} }
\caption{Light bending by a Schwarzschild black-hole}
\end{figure}

For a geodesic whose distance of closest approach is very large,
we may estimate this integral by approximating the geodesic as the 
 straight line $r=\frac{ b }{ \sin \phi} $. The impact parameter 
to this lowest non-trivial order coincides with the distance of 
 nearest approach and equals $b$. To the necessary accuracy
\ben
K dA \approx -\frac{ 2 M }{ r^3} rdr d \phi \,.
\een  

The domain of integration  $D_-$  is, with sufficient accuracy
 over $r\ge \frac{b }{ \sin \phi}$,
$0\le \phi \le 2\pi$. A simple calculation gives
the classic result
\ben
\delta = \frac{ 4 M }{ b}  \,.
\een

Note the same method works for any static metric,
not just Schwarzschild (for an application to gravitational lensing see \cite{Gibbons:2008rj}) and shows that the Gauss-Bonnet method
does not just give qualitative results, but it  can
 be made into a quantitative tool.

\section{No-hair properties from the optical metric}

We now shift our attention to a different aspect of black hole physics, 
the so called `no-hair' property. A stationary
black hole has only three measurable quantities associated with it:
mass, angular momentum and electric charge. Thus no matter how
complicated a system we start with, once it has undergone
gravitational collapse to form a black hole, we are left with only
these three pieces of information. This presents only a minor
philosophical problem if we are prepared to accept that the information
contained in the initial system is somehow trapped irretrievably behind
the horizon. Once one includes Hawking's observation that black holes
may radiate and indeed evaporate over time, the question of where the
information goes becomes more vexed, giving rise to the so called
`information loss paradox'.

We shall be interested with discovering how, at the classical level,
information is lost as a body falls into a black hole. We will work
with an approximation where the in-falling body is supposed to have a
negligible effect on the background and so we may consider physics in
a fixed black-hole geometry. This amounts to a linearisation of the problem, but allows
analytic progress to be made. 

At the linearised level, the no-hair property may be translated
mathematically into the notion that the black hole exterior cannot
support any external fields which are regular both at the horizon and
spacelike infinity. An interesting question is what happens to the
fields around some compact body as it falls into the black hole. This
corresponds to asking what happens to a propagator as its pole
approaches the horizon. This gives information both about the
classical scenario, but also about the outcome of scattering
experiments performed as the body falls towards the hole 
\cite{Teitelboim:1972qx}.

We are interested in finding a fairly general approach to study how
`hair' is lost as bodies carrying charges fall into a black hole. We will argue that this is a
property of the geometry close to the black hole horizon, and so we
can consider the problem in this region where the geometry
simplifies. We translate the question of finding propagators to a
problem in the optical metric and then show how we can estimate the
rate of information loss in this geometry. As we noted above, near the horizon this geometry approaches that of hyperbolic space irrespective of the details of the black hole under consideration.

\subsection{Physics in $\mathbb{R}_t\times \Ht$ \label{phys}}

There have been investigations of physics in spaces of constant negative curvature for some time and with varying motivations. Callan and Wilczek initiated a study of quantum mechanics on $\mathbb{H}^4$ in \cite{Callan} in order to geometrically regulate the infra-red divergences of Euclidean field theory. In \cite{Atiyah} Atiyah and Sutcliffe considered Skyrmions in $\Ht$ as a means of finding approximate Skyrmions in $\mathbb{E}^3$ for the case where the pion mass is non-zero. Field theories on $\Ht$ are also thermodynamically interesting as one might expect, anticipating the Hawking radiation of horizons. A study of some thermodynamic properties, especially Bose-Einstein condensation is shown in \cite{Cognola}. There have also been studies of electrostatics and magnetostatics in hyperbolic space, with particular reference to the Gauss linking formula \cite{deTurck}.

Although the references above provide a reasonably comprehensive discussion of physics in  $\mathbb{R}_t\times \Ht$, in the interests of a self-contained exposition we will discuss some aspects here. Using the near horizon limit of the optical metric, this corresponds after a conformal transformation to physics in the neighbourhood of a non-extremal black hole horizon. The fact that analytic progress is possible may be traced to the fact that the metric is conformally
equivalent to the inside of the future light-cone of the origin in Minkowski space
as follows.

Throughout this section we represent a point in $\Ht$ as a point on the unit pseudosphere
$\Ht=\{X\cdot X=-1, X_0>0\}$ in $\mathbb{E}^{3,1}$. This makes the
equations manifestly $SO(3,1)$ invariant and easy to translate between
different coordinate systems. The point $(t,X)$ in  $\mathbb{R}_t\times \Ht$ is mapped to a point on the interior of the future light
cone of the origin in $\mathbb{E}^{3,1}$ according to:
\begin{eqnarray}
\phi : \mathbb{R}_t\times \Ht &\to& \{x\in \mathbb{E}^{3,1}, x\cdot
x=-1, x_0>0\} \nonumber\\
(t,X) &\mapsto& x = X e^{t}.
\end{eqnarray}
If $g$ is the metric on  $\mathbb{R}_t\times \Ht$and $\eta$ is the standard metric on $\mathbb{E}^{3,1}$, then one finds
that:
\begin{equation}
\phi_* \eta = e^{2 t} g,
\end{equation}
so we have exhibited the conformal equivalence of these two
spaces. This means that given any conformally invariant equation whose
propagator may be found in Minkowski space, one may find the
propagator for $\mathbb{R}_t\times \Ht$.

\subsubsection{Massless Wave Equation}

Although the massless wave equation is conformally invariant, and so
the propagator may be constructed from
the known flat space propagator, it is more convenient to directly
solve in this case. We seek to solve the equation:
\begin{equation}
(\Box_g-\frac{1}{6}R_{g}) G(t,X;\tau, Y) = \left(- \partial_t^2 + \Delta_h+1\right)G= \delta^{(4)}_g((t,X), (\tau,Y)).
\end{equation}
One might think that the appearance of the curvature term above gives rise to an effective mass, however it is important to include this term in the massless wave equation to ensure, for example, that disturbances propagate along the light-cone as one would expect. Following standard treatments, one Fourier transforms in time and
takes $\tau=0$ without loss of generality. We then need to solve
the Helmholtz equation
\begin{equation}
\left( \Delta_h + 1+k^2\right)\tilde{G} = \delta_h (X, Y).
\end{equation}
This has the general solution, found by using geodesic polar coordinates on $\Ht$:
\begin{equation}
\tilde{G}(X,Y) = \frac{A e^{i k \chi}+ Be^{-i k \chi}}{4 \pi \sinh \chi}
\end{equation}
where $\chi = D(X,Y)$ and  $A+B=1$. The fact that this (and other Green's functions on $\Ht$) depends only on $D(X,Y)$ is due to the $2$-point homogeneity of the space. Undoing the
  Fourier transform one finds:
\begin{equation}
G(t,X;\tau, Y) = A \frac{\delta(t-\tau-D(X,Y))}{\sinh D(X,Y)}+B \frac{\delta(t-\tau+D(X,Y))}{\sinh D(X,Y)}
\end{equation}
The choices for $A$ and $B$ determine what combination of the advanced and retarded propagator we have. Note that this propagator is periodic with period $2 \pi i$ in the time coordinate.

\subsubsection{Li\'enard-Wiechert Potential}

The Li\'enard-Wiechert Potential describes the electromagnetic field
due to a charge $q$ moving in Minkowski space along some path $r(s)
\in \mathbb{E}^{3,1}$
where $s$ is any parameter. The potential at a point $x$ is
constructed as follows: first find a solution $s_r$ to the equation:
\begin{equation}
(x - r(s))\cdot(x - r(s)) = 0 \label{lightcone}
\end{equation}
which should correspond to the intersection of the path of the charge
with the \emph{past} light cone for a retarded propagator. The Maxwell
field is then determined by the one-form
\begin{equation}
A = \left. \frac{q}{4 \pi \epsilon_0} \frac{ \dot{r} \cdot dx}{(x
  - r)\cdot \dot{r}}\right |_{s=s_r}.
\end{equation}
Note that this is invariant under re-parameterisations of the path of
the particle $r(s)$. In calculating $F=dA$ one should be wary since
$A$ depends on $x$ both explicitly and also implicitly through
$s_r$. Differentiating (\ref{lightcone}) one finds that:
\begin{equation}
ds_r = - \frac{ (x - r) \cdot dx}{(x
  - r)\cdot\dot{r}}. 
\end{equation}
The standard calculations may now be performed and the Maxwell field
calculated.

In order to find the field  due to a point charge $q$
moving along the curve $(s,R(s)) \in \mathbb{R}_t\times \Ht$ we will
use the conformal invariance of the Maxwell equations. Using the
conformal map $\phi$ we may pull back the Li\'enard-Wiechert potential
from Minkowski space. One finds that the lightcone condition may be
re-written:
\begin{equation}
t-s_r = D(X, R(s_r)) \label{lightconehyp}
\end{equation}
and the potential is given by:
\begin{equation}
A =\frac{q}{4 \pi \epsilon_0} \left. \frac{1}{\dot{R} \cdot X - \sqrt{(R \cdot X)^2-1}} \left[(R\cdot
  X+\dot{R} \cdot X) dt + R \cdot dX + \dot{R} \cdot dX \right]\right |_{s=s_r}.
\end{equation}
Once again the dependence on $(X, t)$ is subtle, but one may calculate
the Maxwell field by using:
\begin{equation}
ds_r = \frac{\sqrt{(X\cdot R)^2 - 1}dt + R \cdot dX}{\sqrt{(X\cdot
    R)^2 - 1} - \dot{R}\cdot X},
\end{equation}
which follows from differentiating (\ref{lightconehyp}).

Calculating the field strength in the limit where the source charge is at a large distance from the observer but $\dot{R}\cdot\dot{R}$ and  $\ddot{R}\cdot\ddot{R}$ remain bounded, we find that the field decays like $e^{-\chi}$ with $\chi$ the separation of charge and observer.

We may interpret this in terms of  black hole optical geometry which approaches $\Ht \times \R_t$ near the horizon. In this case, as we shall see later one must add a static spherically symmetric field to enforce the condition that the black hole is uncharged. We have shown that even including the corrections to the electromagnetic field due to the motion of the charge, the field due to a particle falling into a black hole tends to a monopole charge as the particle approaches the horizon.

\subsubsection{Spinors on hyperbolic space}

Having dealt with spin 0 and spin 1 fields on $\mathbb{R}_t \times \Ht$, the logical next step is to discuss spinors and the Dirac operator on this space. Following Dirac \cite{Dirac}, we will write the Dirac equation in terms of objects in the embedding space, $\mathbb{E}^{3,1}$ as this will allow us to maintain $SO(3,1)$ covariance. 

We will make use of the following observation: the Dirac algebra for $\mathbb{E}^{3,1}$ may be represented in the form:
\begin{eqnarray}
\gamma^0 &=& i \sigma_3 \otimes I_2, \nonumber \\
\gamma^i &=& \sigma_2 \otimes \sigma_i,
\end{eqnarray}
where $\sigma_i$ are the Pauli matrices which form a 2-component representation of the Dirac algebra for $\mathbb{E}^3$. We could, if we so chose, construct this 2-component representation by considering Weyl spinors, $\chi^\alpha$ on an auxiliary $\mathbb{E}^{3,1}$ restricted to a constant time hyperplane, $\Sigma$. These 2-spinors would then transform under Poincar\'e transformations of this auxiliary $\mathbb{E}^{3,1}$ fixing the hyperplane and would have a natural $L^2$ inner product respecting these transformations given by
\begin{equation}
(\chi_1, \chi_2) = \int_\Sigma \mu_\Sigma \hspace{.1cm}\bar{\chi}_1  \bar{\sigma}\cdot T\chi_2 =  \int_\Sigma \mu_\Sigma \hspace{.1cm} \bar{\chi}_1 \chi_2\, .
\end{equation}
We use the notation $\sigma^\mu = (I_2, \sigma_i), \bar{\sigma}^\mu = (-I_2, \sigma_i)$. A Dirac spinor on the original $\mathbb{E}^{3,1}$ space is then the product of two 2-component spinors, the first transforming under $SO(1,1)$ which generates the boosts of the Lorentz group and the second under the rotational $SO(3)$. In fact the second spinors transform under the whole $E(3)$ symmetry of $\mathbb{E}^3$ but the translations act by the identity. The reason we take this somewhat circuitous approach to constructing the Dirac spinors is that it will allow us to construct spinors for $\mathbb{R}_t \times \Ht$ respecting the $SO(3,1)$ invariance of $\Ht$.

We will now make use of the fact that the metric in the forward light cone of the origin of Minkowski space may be written in the form
\begin{equation}
ds^2 = -dt^2 + t^2 h,
\end{equation}
with $h$ the metric on $\Ht$. This is sometimes referred to as the Milne universe. This Minkowski space will play the role of the auxiliary space above. Using the standard approach to construct the Dirac operator from the spin connection, one finds that acting on Weyl spinors,
\begin{equation}
\dir  = \bar{\sigma}^0 \left(\pd{}{t} + \frac{3}{2 t} \right) + \frac{1}{t} \dir^{(h)}. \label{4dir}
\end{equation}
Now any vector in the forward light cone of the origin may be written
\begin{equation}
W = t X, \qquad \textrm{with}\quad  X\cdot X = -1, \quad t>0, \quad X^0 >0
\end{equation}
we may re-write the Dirac operator as
\begin{equation}
\dir = \bar{\sigma} \cdot \pd{}{W} = \bar{\sigma}\cdot X \left(X\cdot \pd{}{W} + \frac{3}{2} \frac{1}{\abs{W}} \right) + \frac{1}{\abs{W}} \left(\abs{W} \bar{\sigma}\cdot \pd{}{W} - \bar{\sigma}\cdot X X \cdot \pd{}{W} - \frac{3}{2} \bar{\sigma}\cdot X  \right),
\end{equation}
comparing this with (\ref{4dir}) we conclude that
\begin{eqnarray}
\dir^{(h)} &=& \bar{\sigma}\cdot \nabla - \bar{\sigma}\cdot X X\cdot \nabla - \frac{3}{2} \bar{\sigma}\cdot X, \nonumber \\
&=& -\bar{\sigma}\cdot X \left(\frac{1}{2} \sigma^\mu \bar{\sigma}^\nu(X_\mu \nabla_\nu -  X_\nu \nabla_\mu) - \frac{3}{2}\right).
\end{eqnarray}
It may be checked that $\bar{\sigma}\cdot X$ anti-commutes with the RHS of this expression, so it is in fact convenient to take
{\begin{eqnarray}
\dir^{(h)} &=&\frac{1}{2} \sigma^\mu \bar{\sigma}^\nu(X_\mu \nabla_\nu -  X_\nu \nabla_\mu) - \frac{3}{2}\nonumber \\
&=& M-\frac{3}{2},
\end{eqnarray}
the operator $M$ is that introduced by Dirac in \cite{Dirac}. The relation to the standard construction for Dirac operators in a curved space is developed in \cite{Paramonov} Following the discussion above, there is a natural $L^2$ inner product on the space of Weyl spinors on $\Ht$ given by
\begin{equation}
(\chi_1, \chi_2)_\Ht = \int_\Ht \mu[X] \hspace{.1cm}\bar{\chi_1}  \bar{\sigma}\cdot X\chi_2 
\end{equation}
Here $\mu[X]$ is the Riemannian volume form of $\Ht$. This inner product is positive definite, because $X\cdot X=-1$ and $X^0>0$. Importantly, it is also $SO(3,1)$ invariant by construction and so respects all of the symmetries of hyperbolic space.

We can exhibit a set of plane wave eigenfunctions of the Dirac operator by considering the analogous plane wave eigenfunctions for the Laplace operator on $\Ht$ exhibited by Moschella and Schaeffer \cite{Moschella:2007zza}. Let $\chi$ be a constant 2-spinor (in the sense that $\nabla_\mu \chi = 0$), then the spinor
\begin{equation}
\kappa_{\omega \chi}(X) = \frac{\omega}{(2 \pi)^{\frac{3}{2}}} \left(\bar{\chi} \bar{\sigma}\cdot X \chi \right)^{-\frac{3}{2} - i \omega} \chi
\end{equation}
satisfies
\begin{equation}
\dir^{(h)} \kappa_{\omega \chi}(X) = i \omega \kappa_{\omega \chi}(X).
\end{equation}
Furthermore these functions tend pointwise to the standard plane wave basis for eigenfunctions on $\mathbb{E}^3$ as the radius of curvature of $\Ht$ tends to infinity. Obviously $\chi$ and $\lambda \chi$ define the same function, up to scale, for any $\lambda \in \mathbb{C}^*$ so that the space of eigenfunctions is $\mathbb{R}_+ \times \mathbb{CP}^1$. Using the results of Moschella and Schaeffer it should be possible to establish the following normalisation and completeness results
\begin{equation}
\left(\kappa_{\omega \chi}, \kappa_{\omega' \chi'} \right)_\Ht = \delta(\omega-\omega') \delta(\epsilon_{\alpha\beta}\chi^\alpha \chi'^\beta) 
\end{equation}
and
\begin{equation}
\int_{\mathbb{CP}^1} \mu[\chi] \int_0^\infty d\omega \hspace{.1cm} \kappa_{\omega \chi}(X) \overline{\kappa_{\omega \chi}}(X')\bar{\sigma}\cdot X' = \delta_\Ht(X, X') I_2 \label{comp}
\end{equation}
however this has so far proved difficult. We shall proceed therefore on the assumption that this is the case. For a discussion of integration over $\mathbb{CP}^1$ and the measure $\mu[\chi]$, see the appendix. One may at least show that the second result is true as the hyperbolic radius tends to infinity.

We can now construct Dirac spinors on $\mathbb{R}_t\times \Ht$ by taking the tensor product of a Weyl spinor on $\Ht$ with an $SO(1,1)$ spinor. The Dirac operator is given by:
\begin{equation}
\dir = i \sigma_3 \otimes I_2 \pd{}{t} + \sigma_2 \otimes \left(M-\frac{3}{2} \right).
\end{equation}
The Dirac conjugate is given by
\begin{equation}
\overline{\psi \otimes \chi} = \bar{\psi} i\sigma_3 \otimes \bar{\chi} \bar{\sigma}\cdot X
\end{equation}
we note that $\gamma^0 =  i \sigma_3 \otimes I_2$ so the Dirac conjugate does not take its standard form. This is related to the fact that we chose to make the Dirac operator on $\Ht$ more symmetric by multiplying by $\sigma \cdot X$. We note finally that we may identify
\begin{equation}
\gamma^5 = \sigma_1 \otimes I_2
\end{equation}
as the chirality matrix which satisfies
\begin{equation}
\db{\gamma^5}{\dir}=0 \qquad \textrm{and} \qquad \left(\gamma^5\right)^2 = -I_4.
\end{equation}
We may finally construct a complete set of plane wave solutions to the Dirac equation
\begin{equation}
\dir \Psi = 0
\end{equation}
as follows:
\begin{equation}
\Psi^0_{s \omega z \lambda}(X,t) = \frac{\omega}{(2 \pi)^\frac{3}{2}} e^{-si\omega t} \psi_{\lambda} \otimes \left(\bar{\chi}_z \bar{\sigma}\cdot X \chi_z  \right)^{-\frac{3}{2}+s \lambda i \omega} \chi_z 
\end{equation}
where
\begin{equation}
\psi_\lambda = \frac{1}{\sqrt{2}}\binom{1}{\lambda}, \qquad \textrm{and} \qquad 
\chi_z = \frac{1}{\sqrt{1+\abs{z}^2}} \binom{1}{ z}\, .
\end{equation}
We have introduced the quantum numbers $s=\pm$ which distinguishes the positive and negative energy solutions, $z\in \mathbb{C}$ which parameterizes $\mathbb{CP}^1$ under the usual stereographic projection and $\lambda = \pm$, the chirality. With this choice of parameterisation of $\mathbb{CP}^1$ the appropriate measure in the completeness relation (\ref{comp}) is
\begin{equation}
\mu[\chi] = \frac{2 i dz d\bar{z}}{\left(1+\abs{z}^2 \right)^2}
\end{equation}
which we recognise as the measure on $S^2$ under stereographic projection.

We will now use these results to calculate the force between electrons (or indeed other leptons) mediated by neutrino exchange.

\subsubsection{Neutrino mediated forces \label{neutf}}

In flat space there is a long range lepton-lepton force mediated by the exchange of a pair of neutrinos. The potential, as shown by Feinberg and Sucher \cite{Feinberg}, is 
\begin{equation}
V(r) = \frac{G_W^2}{4 \pi^3 r^5}\, ,
\end{equation}
where $G_W$ is the weak-interaction coupling constant. Their calculation was based on calculating the one-loop scattering of one electron by another mediated by a $\nu \bar{\nu}$ pair. A simpler means of finding this potential, as described by Hartle \cite{Hartle:1972jj}, is to treat the neutrino field as quantum mechanical and the electrons as classical both in their role as a source for the neutrino field and as particles acted on by that field. Hartle shows that in this limit, the neutrino field obeys the modified Dirac equation
\begin{equation}
\left(i\dir -\frac{G_W}{\sqrt{2}} \gamma \cdot N (1+\gamma^5)\right) \Psi(x) = 0 \label{direqn}
\end{equation}
where $N_\mu$ is the classical electron number current. The equation of motion of an electron in the classical limit is
\begin{equation}
m \frac{D u^\mu}{d\tau} = \frac{G_W}{\sqrt{2}} u^\nu \left(\partial^\mu B_\nu - \partial_\nu B^\mu \right)
\end{equation}
where $u^\mu$ is the electron's four-velocity and the potential $B^\mu$ is given in terms of the neutrino field by:
\begin{equation}
B^\mu(x) = \left \langle \bar{\Psi}(x) \gamma^\mu (1+\gamma^5)\Psi(x) \right \rangle -  \left \langle \bar{\Psi}^0(x) \gamma^\mu (1+\gamma^5)\Psi^0(x) \right \rangle.
\end{equation}
$\Psi(x)$ is the neutrino field with the weak interactions turned on and $\Psi^0(x)$ is the same field with the interactions turned off. Both expectation values are taken in vacuum with no free neutrinos. The normalisation of the neutrino field is fixed by the canonical anti-commutation relations which relate the anti-commutators of fields on a spacelike hypersurface $\sigma$. The only non-vanishing bracket is
\begin{equation}
\db{\Psi(x)}{\bar{\Psi}(x') \gamma\cdot T} = \delta_\sigma(x, x') \label{cacr}
\end{equation}
where $T$ is the future directed unit normal to $\sigma$. Note that the standard relation would be between $\Psi$ and $\Psi^\dagger$, however as noted above, we have chosen to make the Dirac operator simpler at the expense of taking a non-standard Dirac conjugate. For flat space sliced along constant $t$ hyperplanes, (\ref{cacr}) reduces to the standard relation.

Let us suppose that there is a complete set of solutions to the modified Dirac equation (\ref{direqn}) with the same quantum numbers as for the source free Dirac equation, so that we may write
\begin{equation}
\Psi_{s \omega z \lambda}(t, X) = e^{-s i \omega t} \psi_\lambda \otimes \kappa_{s \omega z \lambda}(X)
\end{equation}
and we will assume the completeness relation
\begin{equation}
\sum_s \int_\mathbb{C} \frac{2 i dz d\bar{z}}{\left( 1+\abs{z}^2\right)^2} \int_0^\infty  d\omega \hspace{.1cm}\kappa_{s \omega z \lambda}(X) \bar{\kappa}_{s \omega z \lambda}(X') \bar{\sigma}\cdot X' = \delta_\Ht (X, X') I_2
\end{equation}
(note that we don't sum over $\lambda$ here). We may therefore expand the neutrino field in the form
\begin{equation}
\Psi(t,X) = \sum_\lambda \int_\mathbb{C} \frac{2 i dz d\bar{z}}{\left( 1+\abs{z}^2\right)^2} \int_0^\infty  d\omega \hspace{.1cm} \left \{ e^{-i \omega t} \psi_\lambda \otimes \kappa_{+\omega z \lambda} b_{\omega z \lambda} + e^{i \omega t} \psi_\lambda \otimes \kappa_{-\omega z \lambda} d^\dagger_{\omega z \lambda} \right \}.
\end{equation}
The canonical anti-commutation relations for the neutrino field imply the following non-vanishing relations for the creation operators $b$ and $d$
\begin{equation}
\db{b_{\omega z \lambda}}{b^\dagger_{\omega' z' \lambda'}} = \db{d_{\omega z \lambda}}{d^\dagger_{\omega' z' \lambda'}} = \frac{\left(1+\abs{z}^2\right)^2}{2 i} \delta(z-z') \delta(\omega-\omega') \delta_{\lambda \lambda'}
\end{equation}
while all other brackets vanish. We suppose the existence of a vacuum state $\ket{0}$ such that $b_{\omega z \lambda}\ket{0}=d_{\omega z \lambda}\ket{0} = 0$. Using the anti-commutation relations we find that the electric part of the neutrino mediated vector potential, $V = B^0$ takes the form
\begin{eqnarray}
B^0(X) 
&=&-2  \int_\mathbb{C} \frac{2i dz d\bar{z}}{\left( 1+\abs{z}^2\right)^2} \int_0^\infty  d\omega \hspace{.1cm} \bigl \{\bar{\kappa}_{-\omega z +}(X) \bar{\sigma}\cdot X \kappa_{-\omega z +}(X)  \nonumber \\
&&\left . \qquad -\bar{\kappa}^0_{-\omega z +}(X) \bar{\sigma}\cdot X \kappa^0_{-\omega z +}(X)\right\}.
\end{eqnarray}
The fact that the coupling has a $1+\gamma^5$ factor ensures that only positive chirality modes contribute.

As we are only interested in effects at the lowest order in $G_W$, we may consider an expansion of the spinors $\kappa$ in terms of $G_W$. Expanding to first order
\begin{equation}
\kappa_{s \omega z\lambda} = \kappa^0_{s \omega z\lambda} +G_W \kappa^1_{s \omega z\lambda} 
\end{equation}
we find that provided we assume that $N_t = \delta_\Ht(X, X')$ is the only non-zero component of the electron current, the modified Dirac equation (\ref{direqn}) implies that
\begin{eqnarray}
\left(M - {\textstyle \frac{3}{2}} + i s \lambda \omega\right) \kappa^0_{s \omega z \lambda} &=& 0 \nonumber \\
\left(M - {\textstyle \frac{3}{2}} + i s \lambda \omega\right) \kappa^1_{s \omega z \lambda} &=& i\sqrt{2}\delta_\Ht(X,X') \kappa^0_{s \omega z \lambda}. \label{gexp}
\end{eqnarray}
We know from the previous section that the normalised zero'th order spinors take the form
\begin{equation}
 \kappa^0_{s \omega z \lambda}= \frac{\omega}{(2 \pi)^\frac{3}{2}} \left(\bar{\chi}_z \bar{\sigma}\cdot X \chi_z  \right)^{-\frac{3}{2}+i s \lambda \omega} \chi_z\, . \label{kap0}
\end{equation}
In order to solve the second equation we note that
\begin{equation}
M^2 -2 M = \nabla^2_\Ht\, ,
\end{equation}
where the Laplacian here is the \emph{scalar} Laplacian acting componentwise to the right. This may be verified by following the argument of Dirac \cite{Dirac} with the appropriate signature and dimension. We see that 
\begin{equation}
\left(M-{\textstyle \frac{3}{2}} + i s \lambda \omega \right) \left (M-{\textstyle \frac{1}{2}}-i s \lambda \omega \right) = \left[ \nabla_\Ht^2 + 1 +\left (\omega  + \frac{i s \lambda}{2}\right)^2\right ] I_2
\end{equation}
we recognise the right hand side of the equation as the conformal wave equation with a complex wavenumber, for which we have already found the Green's function. Thus we may solve the second of equations (\ref{gexp}) by
\begin{equation}
 \kappa^1_{s \omega z \lambda}(X) = \left [ \left( M-{\textstyle \frac{1}{2}}-i s \lambda \omega\right) \phi(X,X') \right] \kappa^0_{s \omega z \lambda}(X') ,
\end{equation}
where
\begin{equation}
\phi(X,X') = \frac{i \sqrt{2}}{4\pi} e^{i s \omega \chi} \frac{e^{ - \lambda \chi /2}}{\sinh \chi}, \qquad \textrm{with}\qquad \cosh \chi = - X \cdot X'. \label{scalgrn}
\end{equation}
There is a choice of sign here corresponding to picking the retarded propagator. We note that after Fourier transforming back to the time domain the propagator will be anti-periodic in time with period $2 \pi i$. We also note that there appears to be a breaking of the symmetry one might expect under $\lambda \to -\lambda$. This chiral symmetry breaking is a subtle consequence of the negative curvature and is discussed in \cite{Callan, Gorbar}.

Putting this all together, we find
\begin{equation}
\frac{B^0(X)}{G_W} = -4 \textrm{Re} \int_\mathbb{C} \frac{2 i dz d\bar{z}}{\left( 1+\abs{z}^2\right)^2} \int_0^\infty  d\omega \hspace{.1cm} \bar{\kappa}^0_{-\omega z +}(X)\bar{\sigma}\cdot X \left(M-{\textstyle \frac{1}{2}}+i \omega  \right) \phi(\chi) \kappa^0_{- \omega z +}(X').
\end{equation}
Inserting (\ref{kap0}) for $\kappa^0$, the integrand reduces to the form
\begin{eqnarray}
&& \frac{\omega^2}{(2 \pi)^3} \biggl [ (\bar{\chi}_z \bar{\sigma}\cdot X \chi_z)^{-\frac{3}{2}+ i \omega} (\bar{\chi}_z \bar{\sigma}\cdot X' \chi_z)^{-\frac{1}{2}- i \omega} \frac{\phi'(\chi)}{\sinh \chi} \nonumber \\ &&
\qquad +  (\bar{\chi}_z \bar{\sigma}\cdot X \chi_z)^{-\frac{1}{2}+ i \omega} (\bar{\chi}_z \bar{\sigma}\cdot X' \chi_z)^{-\frac{3}{2}- i \omega} \bigl( (i \omega - {\textstyle \frac{1}{2}}) \phi(\chi) - \coth \chi \phi'(\chi) \bigr ) \biggr ]. \label{integr1}
\end{eqnarray}
We will first perform the integrals over $\mathbb{CP}^1$ which are of the form:
\begin{equation}
I_a=\int_\mathbb{C}  \frac{2 i dz d\bar{z}}{\left( 1+\abs{z}^2\right)^2}  (\bar{\chi}_z \bar{\sigma}\cdot X \chi_z)^{-a} (\bar{\chi}_z \bar{\sigma}\cdot X' \chi_z)^{a-2}. \label{neutint}
\end{equation}
One may verify that
\begin{equation}
(\bar{\chi}_z \bar{\sigma}^\mu \chi_z) = \left(1, -\vect{n}_z \right),
\end{equation}
where $\vect{n}_z$ is the pull back of $z$ to the unit sphere in $\mathbb{R}^3$ under the standard stereographic projection map. $I_a$ is Lorentz invariant (see Appendix), so we may assume without loss of generality that
\begin{eqnarray}
X &=& (\cosh \chi, 0, 0, -\sinh \chi), \nonumber \\
X' &=& (1, 0, 0, 0),
\end{eqnarray}
and we may integrate over $S^2$ using standard spherical polar coordinates so that
\begin{equation}
I_a = \int \frac{\sin \theta d \theta d\phi}{(\cosh\chi + \cos \theta \sinh \chi)^a} = \frac{4 \pi \sinh (1-a) \chi}{(1-a) \sinh \chi}.
\end{equation}
Note that this is symmetric under $a \to 2-a$ as it must be since we could have chosen $X$ and $X'$ the other way around. Integrating (\ref{integr1}) over $\mathbb{CP}^1$ then, we have after some simplification
\begin{equation}
\frac{i \sqrt{2} \omega^2}{8 \pi^3} \left( \frac{1}{\sinh^2 \chi} + \frac{2}{(1+4 \omega^2)\sinh^4\chi}\right) + \frac{\sqrt{2} \omega^2}{8\pi^3(1+4 \omega^2)} \frac{e^{-2 i \chi \omega}}{\sinh^4\chi}\left( 2 \omega \sinh \chi - i \cosh \chi \right),
\end{equation}
so that
\begin{equation}
\frac{B^0(\chi)}{G_W} = -\frac{4 \sqrt{2}}{4 \pi^3} \textrm{Re} \int_0^\infty d\omega \left \{ \frac{\omega^2}{1+4 \omega^2} \frac{e^{-2 i \chi \omega}}{\sinh^4\chi}\left( 2 \omega \sinh \chi - i \cosh \chi \right)\right \}.
\end{equation}
This integral is manifestly divergent for large $\omega$, however the potential we are interested in is a low energy effect, so we may introduce a large momentum cut-off by sending $\chi \to \chi - i \epsilon$ which will make the integral converge for large $\omega$ and taking the $\epsilon \to 0$ limit after calculating the integral. Doing this, we find that the integral may be performed exactly and we find that the neutrino field gives rise to an effective potential:
\begin{equation}
V(\chi) = \frac{G_W}{\sqrt{2}} B^t(\chi) = \frac{G_W^2}{8 \pi^3 \chi^2 \sinh^4\chi}\left(\chi \cosh \chi + \sinh \chi - \chi^2\, \textrm{Shi} \hspace{.09cm}\chi \right),
\end{equation}
where $\textrm{Shi}$ is the sinh integral:
\begin{equation}
\textrm{Shi} \hspace{.09cm} \chi = \int_0^\chi \frac{\sinh t}{t} dt\, .
\end{equation}
This equation is valid for any $\chi$ large with respect to the length scale defined by $G_W$. We may take the limit where the hyperbolic radius of the space tends to infinity and we find that
\begin{equation}
V(\chi) \sim \frac{G_W^2}{4 \pi ^3 \chi^5}
\end{equation}
which is the result of Feinberg and Sucher for flat space.

One might be concerned by the fact that there appears to be an asymmetry between the right and left handed neutrinos implicit in (\ref{scalgrn}) however it is possible to perform the same calculation under the assumption that left handed neutrinos couple to electrons and the answer found is precisely the same.

\subsection{Thermodynamics}

We noted above that the scalar propagator was periodic and the fermion propagator anti-periodic in imaginary time. Reinserting dimensions, the period is given by $2 \pi i R$ where $R$ is the radius of the hyperbolic space. One expects that thermal propagators for a field at temperature $T$ should have an imaginary period equal to $1/T$, the inverse of the temperature. As we remarked above, for the near horizon optical geometry of a horizon with surface gravity $\kappa$, $R=\kappa^{-1}$ so we find that fields in the neighbourhood of a horizon are thermalized at a temperature
\begin{equation}
T_H = \frac{\kappa}{2\pi}\, ,
\end{equation}
precisely the Hawking temperature of the horizon. Notice that we nowhere had to `Euclideanize' the time direction of the manifold in order to derive this result.

\subsection{Approximate calculations near the horizon}

Let us consider trying to construct the propagator for some physical
field in a black hole background in the limit where the pole of the
propagator approaches the horizon. We assume that since we are
considering a small perturbation to the background that the equations
are linear and moreover that they may be converted by a conformal
transformation to equations with respect to the optical
metric. Further, since the black-hole background is assumed to be
static, we may Fourier transform with respect to $t$ so that the
equations may be expressed in terms of the optical geometry of
$t=\mathrm{const.}$ slices. We divide these slices into two regions:
\begin{itemize}
\item {\bf Region I} is a region surrounding the horizon, such that
  in this region the optical metric has constant negative curvature to order
  $\epsilon$.
\item {\bf Region II} is the complement of region I and contains the
  asymptotically flat end.
\end{itemize}
Region I may be thought of as the exterior of a ball in $\Ht$ in the case where the horizon has spherical topology. One would expect that if the topology differs for that of the sphere, then this will not have an effect on the propagator in the limit where the pole approaches the horizon, so we assume that Region I is indeed of this form. The
conformal infinity of the hyperbolic space corresponds to the horizon
of the black hole. We wish to solve the problem:
\begin{equation}
\mathcal{L}\phi(x) = \mathcal{L}_{h_{opt.}} \phi(x) + \mathcal{L}' \phi(x) = \delta(x,x_0)\, ,
\end{equation}
where $x_0$ is close to the horizon and we have split the linear
operator $\mathcal{L}$ into a geometric operator constructed from
$h_{opt.}$ and another part $\mathcal{L}'$ which is assumed to be
small in Region I. This may require shrinking Region I. We do not assume here that $\phi$ is a scalar -- the same
considerations will apply for fields of any spin.

In region I, the problem simplifies to finding a propagator in
hyperbolic space. This is a simplification because $\Ht$ is maximally
symmetric so one may make use of isometries to move the pole of the
propagator around. We define $G_I(x,x_0)$ to satisfy the equation on
hyperbolic space:
\begin{equation}
\mathcal{L}_h G_I(x, x_0) = \delta(x, x_0)
\end{equation}
subject to suitable boundary conditions as $x$ approaches conformal
infinity (i.e.\ the horizon). 

In region II, we are solving the homogeneous problem
\begin{equation}
\mathcal{L}_{h_{opt.}} G_{II}(x,x_0) + \mathcal{L}' G_{II}(x,x_0) = 0
\end{equation}
such that $G_{II}(x, x_0)$ agrees with $G_{I}(x,x_0)$ at the boundary
between regions I and II and decays suitably as $x$ approaches the
asymptotically flat infinity. The approximate propagator we construct
is then given by:
\begin{equation}
G(x, x_0) =\left\{ \begin{array}{ll}
 G_I(x,x_0) + K(x) & \textrm{if $x$ in Region I}\\
 G_{II}(x,X_0) + K(x) & \textrm{if $x$ in Region II}\\
  \end{array} \right. ,
\end{equation}
$K$ here is any solution of the homogeneous problem on the whole of
the exterior of the black hole which satisfies appropriate boundary
conditions both at spacelike infinity and at the horizon. It is these
solutions which carry any `hair' which the black hole may have. The
charges carried by the black hole as a result of $K$ do not follow
from regularity at infinity or the horizon but must be determined by,
for example, integral conservation laws.

We are now ready to describe the limit as the pole of the propagator
approaches the horizon. In Region I we see this as the pole of a
propagator in $\Ht$ moving towards conformal infinity. As the boundary
of Region I is a fixed compact surface in $\Ht$, the fields on the
boundary decay. Typically this decay is exponentially quickly in the
hyperbolic distance of the pole from some fixed point. Thus $G_{II}$
will also decay at this rate, by linearity. Thus only $K$ can remain
in the limit as the pole approaches the horizon, with all other terms
decaying. If the black hole cannot support a regular external field
$K$, then the fields must all approach zero as the pole of the
propagator approaches the horizon. In the case of spherical symmetry it is useful to take the
boundary of Region I to be a sphere as it its then possible to
decompose all the functions into spherical harmonics and the decay
rates for each multipole moment can be calculated separately.

We thus have a method to calculate the rates of decay of propagators
as their poles approach the horizon in terms of the distance in the
optical metric. We may relate the optical distance along a radial
geodesic starting from some fixed point, $\chi$, to the proper distance to the
horizon along that geodesic in the physical metric, $\delta$, by:
\begin{equation}
\delta \sim C e^{-\chi}
\end{equation}
in the region near the horizon. This allows us to re-express the decay
rates in terms of proper distance to the horizon in the physical
metric. We will give constructions below for some simple propagators
in hyperbolic space which are useful when constructing the approximate
propagators in region I.

\subsubsection{Example - Massive scalar field}

In the case of a massive scalar field satisfying the Klein-Gordon
equation, we wish to find a propagator which satisfies:
\begin{equation}
\Box_g \psi - m^2 \psi = \delta(x,x_0).
\end{equation}
Using conformal transformations defined in section (\ref{scalfield}),
solving this is equivalent to solving the equation
\begin{equation}
\frac{1}{H^5}\left(\Delta_h \Phi + \Phi\right) + \frac{1}{H}\left(k^2
- m^2 \Omega^2 \right)\Phi = \frac{1}{H^6} \delta_h(x,x_0),
\end{equation}
where $h$ refers, as always, to the metric on $\Ht$ and $H\to1$,
$\Omega \to 0$ as we approach the horizon. This is of the form
supposed above and the exterior supports no solutions to the
homogeneous Klein-Gordon equation except the zero solution. The
propagator in region I is given by:
\begin{equation}
\frac{e^{i k \chi}}{\sinh \chi}, \qquad \mathrm{where}\quad  \chi = D(x,x_0),
\end{equation}
with $D(p,q)$ the distance in the optical metric between $p$ and
$q$. Thus as $x_0$ goes to conformal infinity, the propagator falls
off like $e^{-\chi}$. In terms of the proper distance of the pole of the
propagator from the horizon $\delta$, we find that the propagator
vanishes like $\delta^1$. This is in agreement with Teitelboim \cite{Teitelboim:1972qx}. We will verify this analysis below by showing that for the $k=0$ mode
we may solve the problem exactly throughout the exterior.

\subsubsection{Example - Proca equation}

The generalisation of Maxwell's equations for electromagnetism to the case where the photon is not taken to be massless is given by the Proca equations. In terms of a one-form $A$ the vacuum equations may be written:
\begin{equation}
\star d \star d A + m^2 A = 0\, .
\end{equation}
In this case, we may quickly estimate the rate at which the information is lost as a charged particle falls quasi-statically into a (Schwarzschild or Reissner-Nordstr\"om) back hole if the electromagnetic field is mediated by a massive vector boson. We make the ansatz:
\begin{equation}
A = \frac{\psi(\vect{x})}{H(\vect{x})} dt,
\end{equation}
and find that for a point charge at $x_0$ the function $\psi$ should satisfy:
\begin{equation}
\frac{1}{H^5} \Delta_{h} \psi - m^2 \frac{\Omega^2}{H} \psi = \frac{1}{H^6} \delta_{h}(x, x_0),
\end{equation}
where $h, H$ and $\Omega$ are as in the last section. This is once again of the form conjectured above and for $m \neq 0$ there are no solutions to the vacuum equations regular throughout the exterior of the black hole. The $m=0$ case corresponds to a Maxwell field and is treated below, however we expect a significant difference as for this case the equations are gauge invariant. 

The propagator in region I is given by:
\begin{equation}
\frac{1}{e^{2 \chi}-1}, \qquad \mathrm{where}\quad  \chi = D(x,x_0),
\end{equation}
where $D(p, q)$ is as above. As the pole recedes to infinity the propagator decays like $e^{-2 \chi}$ corresponding to a fall off as the square of the proper distance to the horizon, $\delta^2$. This decay at twice the rate of the massive scalar boson case is also in agreement with Teitelboim.

\subsubsection{Example - Forces from Neutrino pair exchange}

As noted above there is a force in flat space between leptons mediated by neutrinos which might in principle be used to measure the lepton number of a black hole. As a black hole should not have a measurable lepton number associated with it, we will now consider the problem of neutrino mediated forces in the vicinity of an event horizon. We expect such forces to vanish as a lepton approaches the horizon. We will require the following short Lemma which may be proven by considering the behaviour of the spin connection under conformal transformations.

\begin{lemma}
Suppose $\tilde{g} = \Omega^2 g$ are two conformally related $n$-dimensional metrics and \makebox{$\tilde{\psi} = \Omega^{-\left( n-1\right)/2} \psi$} is a Dirac spinor, then
\begin{equation}
\widetilde{\not\hspace{-.09cm} D} \tilde{\psi} =   \Omega^{-\left(n+1\right)/2}  \not\hspace{-.09cm} D\psi.
\end{equation}
\end{lemma}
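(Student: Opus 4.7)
The approach is the textbook one: work in a local orthonormal frame, track how the vielbein, the gamma matrices and the spin connection transform under $g \mapsto \Omega^2 g$, and then observe an algebraic cancellation in the Dirac operator that is arranged precisely by the weight $-(n-1)/2$.

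First, I would fix a vielbein $e^a_\mu$ for $g$ so that $\tilde{e}^a_\mu = \Omega\, e^a_\mu$ is a vielbein for $\tilde{g}$, with dual $\tilde{e}^\mu_a = \Omega^{-1} e^\mu_a$. Keeping the flat-index gamma matrices $\gamma^a$ unchanged, this gives $\tilde{\gamma}^\mu = \Omega^{-1}\gamma^\mu$. Next, applying Cartan's first structure equation to $\tilde{e}^a = \Omega e^a$ and using $de^a = -\omega^a{}_b \wedge e^b$, I would derive
\begin{equation*}
\tilde{\omega}^{ab} \;=\; \omega^{ab} + \omega^{b}\, e^{a} - \omega^{a}\, e^{b}, \qquad \omega_a := e^\mu_a \partial_\mu \ln\Omega,
\end{equation*}
where indices on $\omega_a$ are raised with the flat metric. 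Antisymmetry in $a,b$ is automatic, and the correction is trivially verified to reproduce $-d\ln\Omega \wedge e^a$ upon wedging with $e^b$.

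Next, I would translate this into a statement about the spinor covariant derivative. With $\gamma_{ab} = \tfrac{1}{2}[\gamma_a,\gamma_b]$, a direct computation gives
\begin{equation*}
\tilde{\nabla}_\mu \psi \;=\; \nabla_\mu \psi \;+\; \tfrac{1}{2}\omega^b e^a_\mu\, \gamma_{ab}\,\psi.
\end{equation*}
Applying $\tilde{\nabla}_\mu$ to $\tilde{\psi} = \Omega^{-(n-1)/2}\psi$, then contracting with $\tilde{\gamma}^\mu = \Omega^{-1}\gamma^\mu$, I obtain
\begin{equation*}
\widetilde{\not\hspace{-.09cm}D}\tilde{\psi}
\;=\; \Omega^{-(n+1)/2}\Bigl(\not\hspace{-.09cm}D\psi \;-\; \tfrac{n-1}{2}\,\gamma^\mu\partial_\mu\ln\Omega\,\psi \;+\; \tfrac{1}{2}\,\omega^b\,\gamma^a\gamma_{ab}\,\psi\Bigr).
\end{equation*}
The proof then closes with the Clifford identity
\begin{equation*}
\gamma^a\gamma_{ab} \;=\; (n-1)\,\gamma_b,
\end{equation*}
which follows from $\gamma^a\gamma_a = n$ and $\gamma^a\gamma_b\gamma_a = -(n-2)\gamma_b$. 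This converts the last term into $\tfrac{n-1}{2}\gamma^\mu\partial_\mu\ln\Omega\,\psi$, exactly cancelling the derivative of the conformal factor, and leaves $\widetilde{\not\hspace{-.09cm}D}\tilde{\psi} = \Omega^{-(n+1)/2}\not\hspace{-.09cm}D\psi$.

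The main obstacle is purely bookkeeping: getting the signs and numerical factors right in the Cartan-equation derivation of $\tilde{\omega}^{ab}$ and in the contraction of $\gamma^\mu$ with $C^{ab}_\mu\gamma_{ab}$. Once the identity $\gamma^a\gamma_{ab}=(n-1)\gamma_b$ is available, the precise value $-(n-1)/2$ for the conformal weight of $\psi$ is the unique choice that kills the unwanted $d\ln\Omega$ contributions, which is the content of the lemma.
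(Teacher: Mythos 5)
Your proof is correct and follows exactly the route the paper indicates: the paper offers no detailed argument, merely remarking that the lemma "may be proven by considering the behaviour of the spin connection under conformal transformations," which is precisely the computation you carry out. Your signs, the corrected spin connection $\tilde{\omega}^{ab}=\omega^{ab}+\omega^b e^a-\omega^a e^b$, and the identity $\gamma^a\gamma_{ab}=(n-1)\gamma_b$ all check out, so this is a valid filling-in of the paper's sketch.
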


Thus solutions of Dirac's equation for $g$ may be rescaled to solutions for $\tilde{g}$. Also
\begin{equation}
\int_\sigma d\tilde{\sigma} \bar{\tilde{\phi}}\tilde{\psi} = \int_\sigma (d\sigma \Omega^{n-1})(\Omega^{\frac{1-n}{2}})^2\bar{\phi}\psi = \int_\sigma d\sigma \bar{\phi}\psi,
\end{equation}
so that orthonormality is preserved. It is not true however that if we start with a complete set then after rescaling we have a complete set. We will now restrict to the case where $d=4$ and the conformal transformation depends only on the spatial coordinates $X$ so that the metric remains static.

By considering how the equation (\ref{direqn}) transforms under such a conformal transformation then \emph{assuming the rescaled solutions to Dirac's equation are complete} we may calculate the interaction potential for the neutrino mediated force. Suppose that $V(X,X')$ represents the potential at $X$, due to an electron at $X'$, with the metric $g$. Then the potential $\tilde{V}(X,X')$ for the metric $\tilde{g}$ is given by:
\begin{equation}
\tilde{V}(X,X') = \Omega^{-3}(X) \Omega^{-2}(X') V(X, X').
\end{equation}
There is clearly an asymmetry between the source and the test particle, however this is due the the redshift effect which means that an energy measured at different spatial points will vary.

As an example, we may consider conformally rescaling $\mathbb{R}_t\times\Ht$ to a metric on a static patch of the de Sitter space. In this case we pick an arbitrary point $X^0$ and we may write the de Sitter metric as
\begin{equation}
\tilde{g} = \frac{1}{(-X \cdot X^0)^2} g,
\end{equation}
where $g$ is the metric on $\mathbb{R}_t\times\Ht$. Suppose we take a patch with the observer at the origin, then the potential measured by this observer due to an electron at $X'$ is given by:
\begin{equation}
V(\chi) = \frac{G_W^2 \cosh^2 \chi}{8 \pi^3 \chi^2 \sinh^4\chi}\left(\chi \cosh \chi + \sinh \chi - \chi^2\, \textrm{Shi} \hspace{.09cm}\chi \right),
\end{equation}
with $\cosh\chi = -X\cdot X'$. As the electron approaches the horizon, $\chi \to \infty$ and the potential is extinguished like $e^{-\chi}/\chi^3$, thus demonstrating the no-hair property of the de Sitter cosmological horizon for neutrino mediated forces.

Unfortunately this method does not work completely for the Schwarzschild event horizon because the conformally rescaled solutions of the Dirac equation do not form a complete basis, essentially because neutrinos may start either at spatial infinity or at the horizon, this point is made by Teitelboim and Hartle \cite{Teitelboim:1972qx, Hartle:1972jj}. Accordingly, we do not reproduce precisely the extinction rate of Hartle, who finds the potential vanishes like $e^{-\chi}/\chi$, but we do find the correct exponential rate.

\subsection{Exact Calculations for Schwarzschild and Reissner-Nordstr\"om}

\subsubsection{Electric Charge}

We would like to find the field due to a static electric charge in the
Schwarzschild or Reissner-Nordstr\"om background. In order to do this, we make use of the fact
that Maxwell's equations in vacuo:
\begin{equation}
\label{maxwell} d F =0, \qquad d \star_{g} F = 0,
\end{equation}
are conformally invariant in $4$ space-time dimensions. Thus if we
can find the field due to a point particle with respect to the optical
metric $\gopt$ then we know the field with respect to the physical
metric. The reason for this is that in $4$ spacetime dimensions, for any
$2$-form $\omega$ with conformal weight $0$, we have:
\begin{equation}
\star_{\Omega^2 g} \omega = \star_{g} \omega. \label{2conf}
\end{equation}
Thus the Maxwell action
\begin{equation}
S = \int_\mathcal{M} F \wedge \star F
\end{equation}
does not change under a conformal transformation. It is also helpful
to note at this stage that the charge contained inside a $2$-surface
$\Sigma$:
\begin{equation}
Q_\Sigma = \int_\Sigma \iota^* (\star F )
\end{equation}
is conformally invariant, where $\iota:\Sigma \to \mathcal{M}$ is the
inclusion map and $\star$ may refer to any representative of the
conformal class of $g$, by (\ref{2conf}).

In order to solve the first Maxwell equation, we as usual introduce a
one-form potential $A$ and make a static ansatz:
\begin{equation}
F = dA, \qquad A = \Phi(\vect{x}) dt\, .
\end{equation}
The reason for this
static ansatz is primarily the fact that the resulting equations are
analytically tractable. It will give a good approximation to the field
of a freely falling particle, provided the particle is not moving very
quickly. Alternatively, one may imagine a thought experiment where a
charge is lowered from infinity towards the black hole horizon and
measurements of the fields are made as the charge approaches the black
hole.

The second Maxwell equation becomes the familiar Laplace equation for
$\Phi$ with respect to the optical metric:
\begin{equation}
d \star_{\hopt} d \Phi = 0.
\end{equation}
We will make use of the fact noted above that
\begin{equation}
\hopt = H^{4} h,
\end{equation}
where $h$ is the hyperbolic metric of $\Ht$ with radius $1$ and $H$ is
harmonic on $\Ht$. We are therefore able to relate the Laplacian of
$\hopt$ to the Laplacian of $h$. We will require the following lemma:
\begin{lemma}
 \label{lem} Suppose $h_1$ and $h_2$ are two three dimensional metrics with related
Laplace operators $^1\Delta$ and $^2\Delta$. Further suppose that they
are conformally related:
\begin{equation}
h_1 = H^4 h_2.
\end{equation}
Then if $\phi = H^{-1} \Psi$ the following relation holds:
\begin{equation}
^1\Delta \phi = \frac{1}{H^5} {}^2\Delta \Psi - \frac{\Psi}{H^6}
  {}^2\Delta H. \label{lap}
\end{equation}
\end{lemma}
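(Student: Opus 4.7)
The plan is to derive the identity by direct computation, using only the standard formula for how the scalar Laplacian transforms under a conformal rescaling together with the Leibniz rule. Since both sides of the claimed equation are linear second-order differential expressions in $\Psi$ (with coefficients built from $H$), the identity should emerge from elementary algebra without any hidden geometric input.

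First I would record the general formula for the conformal change of the Laplacian. Writing $h_1 = e^{2\sigma} h_2$ with $\sigma = 2\log H$, a standard computation using $\sqrt{|h_1|} = e^{n\sigma}\sqrt{|h_2|}$ and $h_1^{ij} = e^{-2\sigma} h_2^{ij}$ gives, in dimension $n$,
\begin{equation*}
{}^1\Delta \phi = e^{-2\sigma}\Bigl({}^2\Delta \phi + (n-2)\, h_2^{ij}\partial_i \sigma\,\partial_j \phi\Bigr).
\end{equation*}
Specialising to $n=3$ with $\sigma = 2\log H$, so that $\partial_i\sigma = 2 H^{-1}\partial_i H$, this becomes
\begin{equation*}
{}^1\Delta \phi = H^{-4}\Bigl({}^2\Delta \phi + 2H^{-1}\, h_2^{ij}\partial_i H\,\partial_j \phi\Bigr).
\end{equation*}

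Next I would substitute $\phi = H^{-1}\Psi$ into the bracket and expand. The piece ${}^2\Delta(H^{-1}\Psi)$ produces three contributions by Leibniz: a term $\Psi\,{}^2\Delta(H^{-1}) = -H^{-2}\Psi\,{}^2\Delta H + 2H^{-3}|\nabla H|^2 \Psi$, a mixed gradient term $-2H^{-2}\, h_2^{ij}\partial_i H\,\partial_j \Psi$, and the leading term $H^{-1}\,{}^2\Delta \Psi$. The correction $2H^{-1}h_2^{ij}\partial_i H\,\partial_j(H^{-1}\Psi)$ expands to $-2H^{-3}|\nabla H|^2 \Psi + 2H^{-2}\, h_2^{ij}\partial_i H\,\partial_j \Psi$.

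The crucial observation is that when these expressions are added, the two $|\nabla H|^2 \Psi$ terms cancel, as do the two mixed $h_2^{ij}\partial_i H\,\partial_j \Psi$ terms, leaving just $H^{-1}\,{}^2\Delta \Psi - H^{-2}\Psi\,{}^2\Delta H$. Multiplying by the overall $H^{-4}$ yields exactly (\ref{lap}). The main obstacle is nothing more than careful bookkeeping: the cancellation of the quadratic gradient terms and the first-derivative terms must be tracked precisely, but no geometric subtlety is involved beyond the conformal transformation law of the Laplacian itself. Equivalently, one could note that in dimension $n=3$ the rescaling $\phi = H^{-(n-2)/2}\Psi = H^{-1}\Psi$ is precisely the one entering the conformal Laplacian, so the Ricci scalars drop out of the Yamabe identity and leave the claimed relation.
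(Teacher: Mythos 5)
Your computation is correct and follows essentially the same route as the paper, which simply invokes the coordinate formula $\Delta = \frac{1}{\sqrt{g}}\partial_i(\sqrt{g}\,g^{ij}\partial_j)$ and collects terms; you merely organise the same calculation into two steps (conformal transformation law of the Laplacian, then substitution of $\phi = H^{-1}\Psi$) and verify the cancellations explicitly. The closing remark identifying $H^{-(n-2)/2}$ as the conformal-Laplacian weight in $n=3$ is a nice sanity check but not a different proof.
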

\begin{proof}
Use the standard formula $\Delta = \frac{1}{\sqrt{g}} \pd{}{x^i}
\sqrt{g} g^{ij} \pd{}{x^j}$ and make the substitutions above, then
collect terms to find (\ref{lap}).
\end{proof}

We wish to find the Green's function for the Laplacian on
$\tilde{h}=\hopt$. This satisfies the following equation:
\begin{equation}
\widetilde{\Delta} \widetilde{G}(x, x_0) = \delta_{(\tilde{h})}(x,
x_0), \label{green}
\end{equation}
Where $\widetilde{\Delta}$ is taken to act on the $x$ coordinates. The
Dirac delta function is defined by the requirement:
\begin{equation}
\int_\mathcal{U} \delta_{(g)}(x, x_0) d \mathrm{vol_g} = \left\{ \begin{array}{ll}
 1 & \textrm{if $x_0 \in \mathcal{U}$}\\
 0 & \textrm{if $x_0 \notin \mathcal{U}$}\\
  \end{array} \right.
\label{rhon}
\end{equation}
for an open subset $\mathcal{U} \subset \mathcal{M}$.
Applying the above lemma and making use of the fact that $\Delta_h
H=0$ we have that
\begin{equation}
\widetilde{\Delta} \phi = \frac{1}{H^5} \Delta_h \Psi,
\end{equation}
with $\phi$ and $\Psi$ related as above. Inserting this into the
definition of the Green's function we have $\widetilde{G}(x, x_0) =
H(x)^{-1} G(x,x_0)$, where $G$ obeys:
\begin{equation}
\Delta_h G(x,x_0) = H(x)^5 \delta_{(\tilde{h})}(x,x_0) = H(x_0)^{-1}
\delta_{(h)}(x,x_0), \label{G}
\end{equation}
where in this last step we use properties of the Dirac delta
function. It would appear that we have simply replaced one Green's
function problem with another, however the great advantage is that we
now seek Green's functions on $\Ht$ which is maximally symmetric so if
we can find the Green's function for $x_0=0$ we can generate all
Green's functions by $SO(3,1)$ transformations.

Suppose $V_O(x)$ satisfies
\begin{equation}
\Delta_h V_O(x) = \delta_{(h)}(x, O), \qquad V_O \to \frac{1}{4
  \pi}\quad  \textrm{as} \quad D(x, O) \to \infty,
\end{equation}
where $O$ is a fixed point in $\Ht$ and $D(x,O)$ is the hyperbolic
distance from $x$ to $O$. In Beltrami coordinates $V_O(\vect{x})= (4
\pi \abs{\vect{x}})^{-1}$. By $SO(3,1)$ invariance,
for any other point $x_0$ we can find a isometry
$T:\Ht \to \Ht$ which satisfies
\begin{equation}
T_{x_0}^*h = h, \qquad T_{x_0}(x_0)=O. \label{Tdef}
\end{equation}
We then define
\begin{equation}
V_{x_0} = T_{x_0}^* V_O, \qquad \mathrm{i.e.} \quad V_{x_0}(x) = V_O(T_{x_0}(x)).
\end{equation}
The map $T$ is not uniquely defined, but since $V_O$ is
spherically symmetric any two maps $T$ satisfying (\ref{Tdef}) give
the same function $V_{x_0}$. This new function satisfies
\begin{equation}
\Delta_h V_{x_0}(x) = \delta_{(h)}(x, x_0).\label{Vx0}
\end{equation}
Putting together (\ref{G}) and (\ref{Vx0}) we find that the Green's
function for the Laplacian of $\hopt$ has the form
\begin{equation}
\widetilde{G}(x,x_0) = \frac{1}{H(x) H(x_0)}\left( V_O(T_{x_0}(x))+A\right) + B.
\end{equation}
We have now to specify boundary conditions. The constant $B$ is
unphysical, and if we choose $B=0$, the potential will vanish at the
asymptotically flat end. The constant $A$ arises because if $V_{x_0}$
satisfies (\ref{Vx0}) then so does $V_{x_0}+A$. This constant gives
rise to a non-trivial field which corresponds to the black hole
carrying a (linearised) charge. Transforming to Eddington-Finkelstein
coordinates shows that the function $\widetilde{G}$ is regular at the
horizon for all values of $A$, so we must look elsewhere for our final
boundary condition. This comes from the fact that Gauss' law should be
satisfied. If one considers a surface $\Sigma$ which encloses the
black hole, but not the point $x_0$ then $Q_\Sigma$ should
vanish. Enforcing this condition fixes $A$. In the case where the harmonic function $H$
takes the Reissner-Nordstr\"om form:
\begin{equation}
\sqrt{\mu} H(x) = 4 \pi \mu V_O(x) + m
\end{equation}
Gauss' law requires $A = \frac{m}{4 \pi \mu}$ and we finally have:
\begin{equation}
\widetilde{G}(x,x_0) = \frac{1}{H(x) H(x_0)}\left(
V_O(T_{x_0}(x))+\frac{m}{4 \pi \mu}\right). \label{fulG}
\end{equation}
This construction is valid for both Schwarzschild and
Reissner-Nordstr\"om and gives the linear perturbation to the
electromagnetic field due to a static point charge located in the
spacetime. The potential has been found in terms of geometric objects
of hyperbolic space, and so is valid for any coordinate system on $\Ht$.

We can see from this equation how the information associated with the
precise location of the charged particle is lost as it is lowered
towards the black hole. The only term which is not spherically
symmetric about $O$ in (\ref{fulG}) is the $V_O$ term. As the point
$x_0$ recedes from $O$ towards the black hole horizon which is at the
conformal infinity of $\Ht$, this term approaches a constant
exponentially quickly in $D(x_0, O)$. The potential tends to
the spherically symmetric field associated with the black hole
carrying a charge and deviations from this field fall exponentially
with $D(x_0, O)$.

We plot below the isopotentials for a point charge in
Schwarzschild, taking isotropic coordinates so that the spatial
sections are conformally flat and the lines of force are normal to the
isopotentials. Isotropic coordinates for Schwarzschild correspond to
Poincar\'e coordinates on the hyperbolic space from which the optical
metric is constructed. The black region in the plots corresponds to
the interior of the black hole event horizon and we have returned the
asymptotically flat end to infinity.

\newpage
\begin{figure}[!h]
\begin{minipage}[t]{0.46\linewidth} 
\centering {\includegraphics[height=2.5in,
width=2.5in]{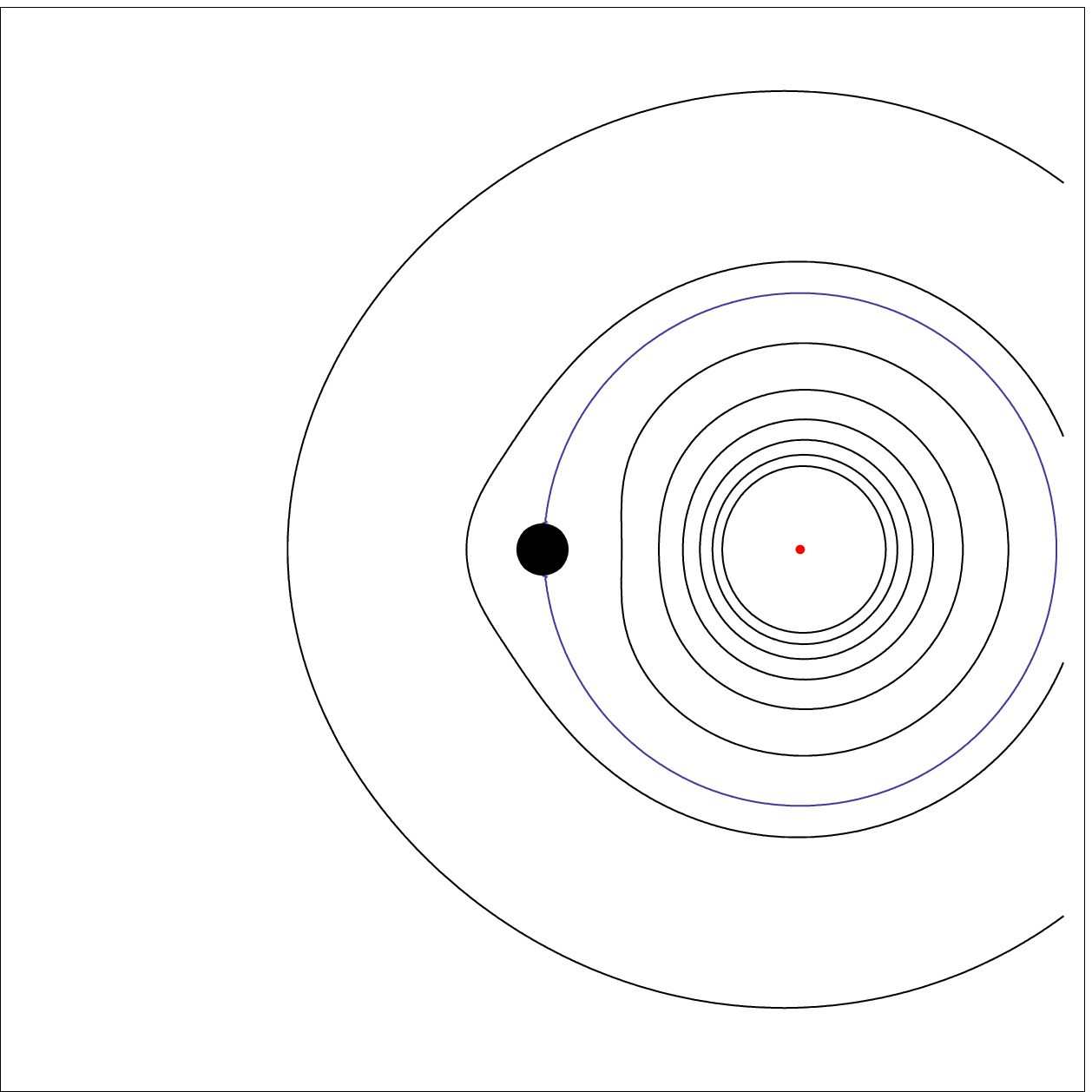}}
\end{minipage}
\hfill 
\begin{minipage}[t]{0.46\linewidth}
\centering  {\includegraphics[height=2.5in,
width=2.5in]{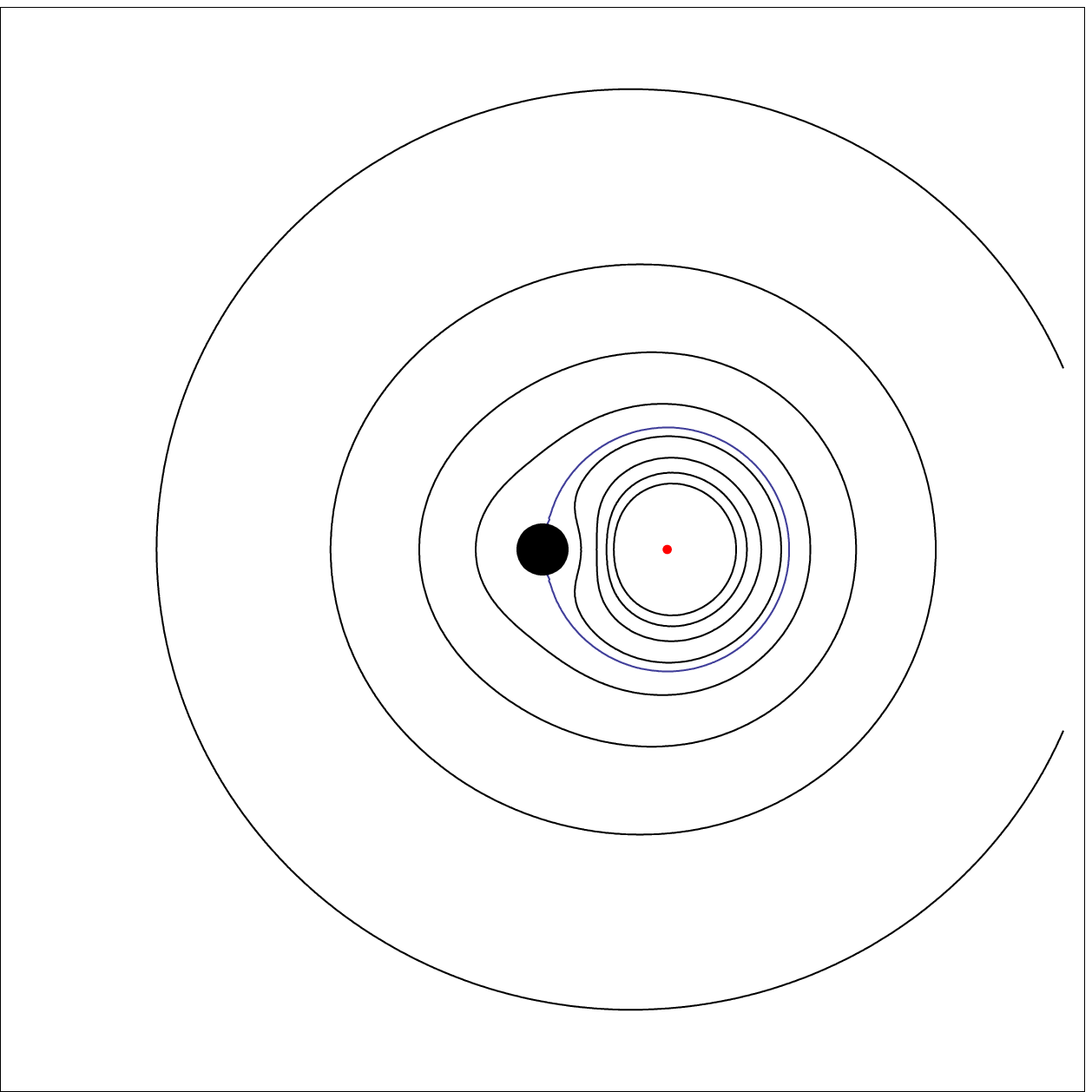}}
\end{minipage}
\end{figure}

\begin{figure}[!h]
\begin{minipage}[t]{0.46\linewidth} 
\centering {\includegraphics[height=2.5in,
width=2.5in]{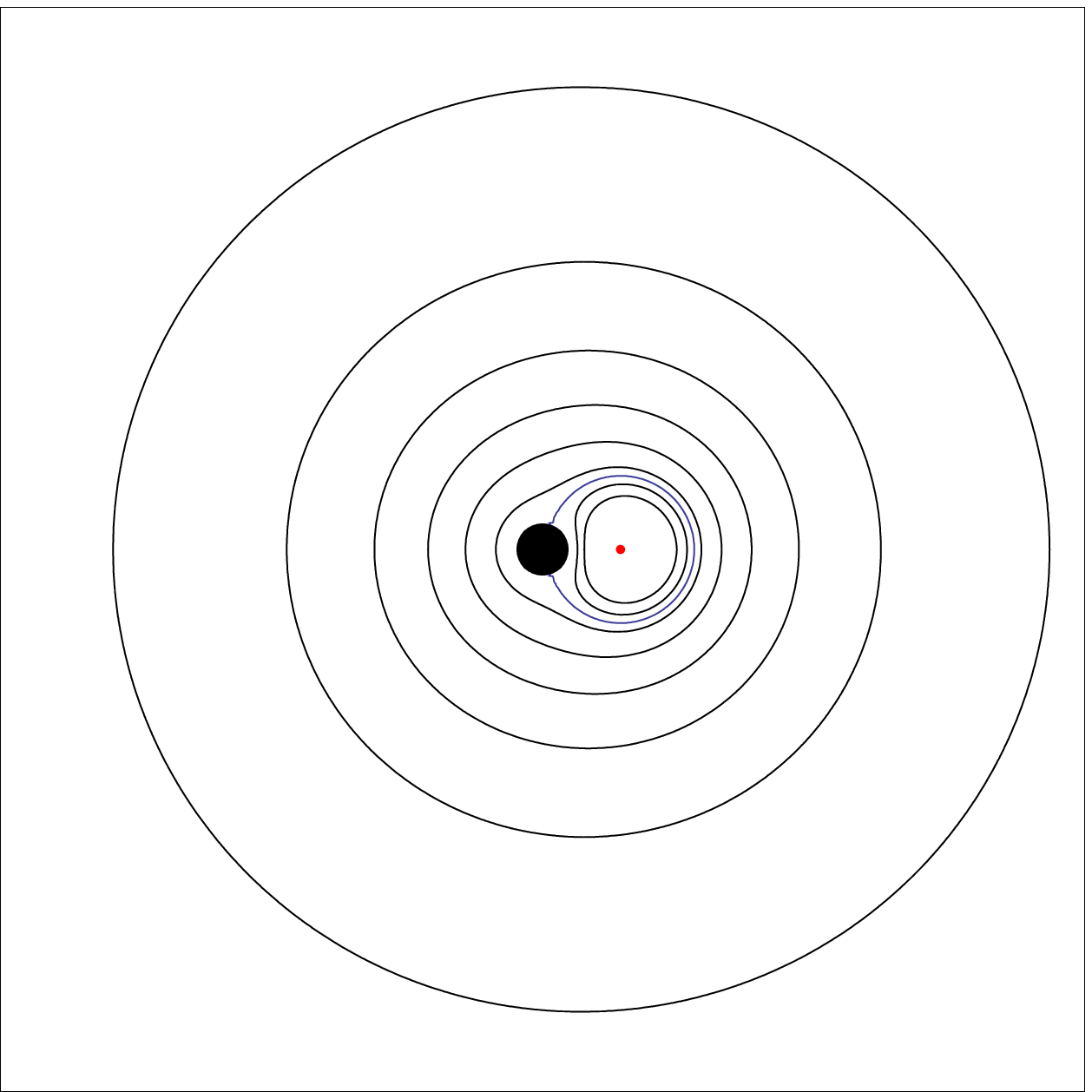}}
\end{minipage}
\hfill 
\begin{minipage}[t]{0.46\linewidth}
\centering  {\includegraphics[height=2.5in,
width=2.5in]{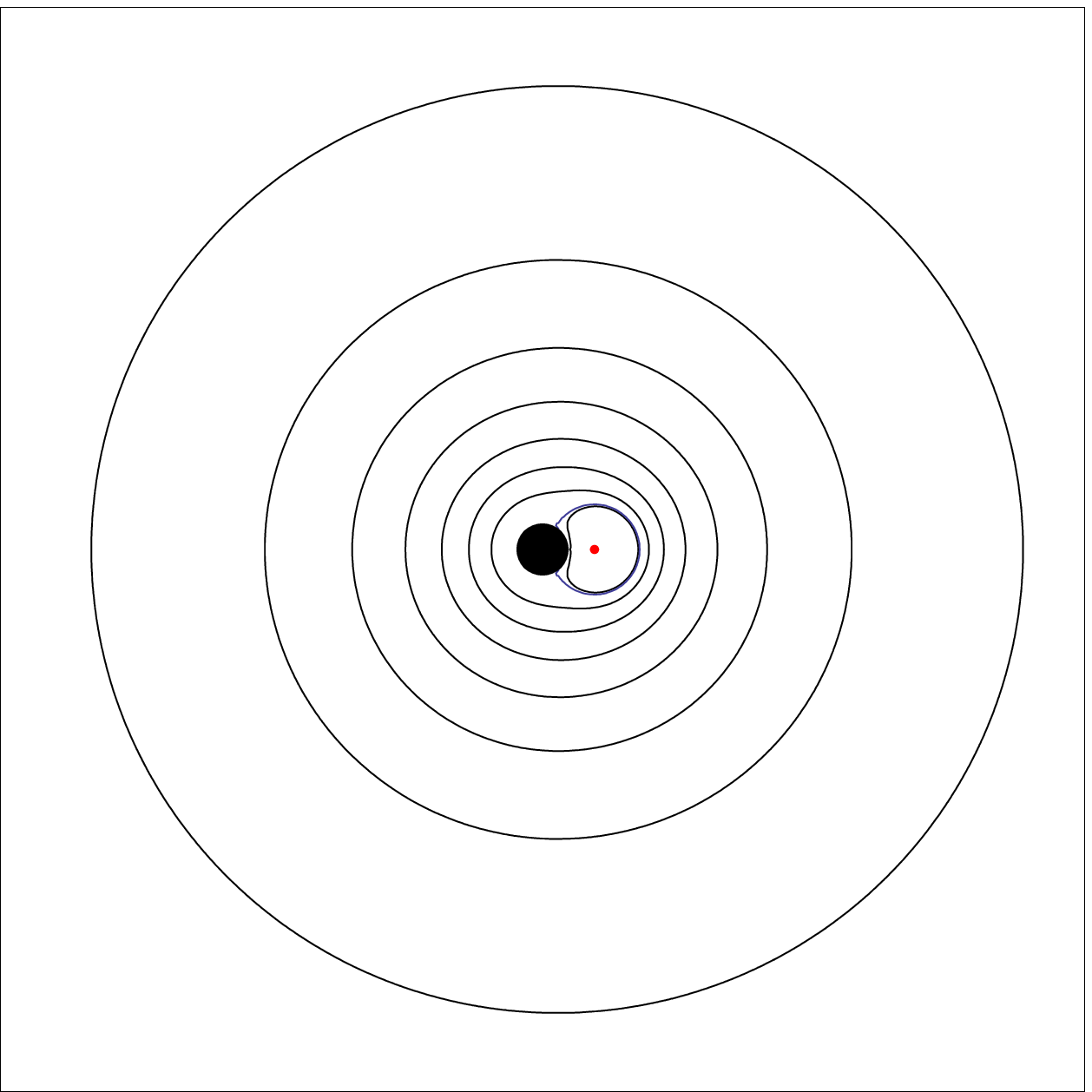}}
\end{minipage}
\end{figure}

\begin{figure}[!h]
\centering {\includegraphics[height=2.5in,
width=2.5in]{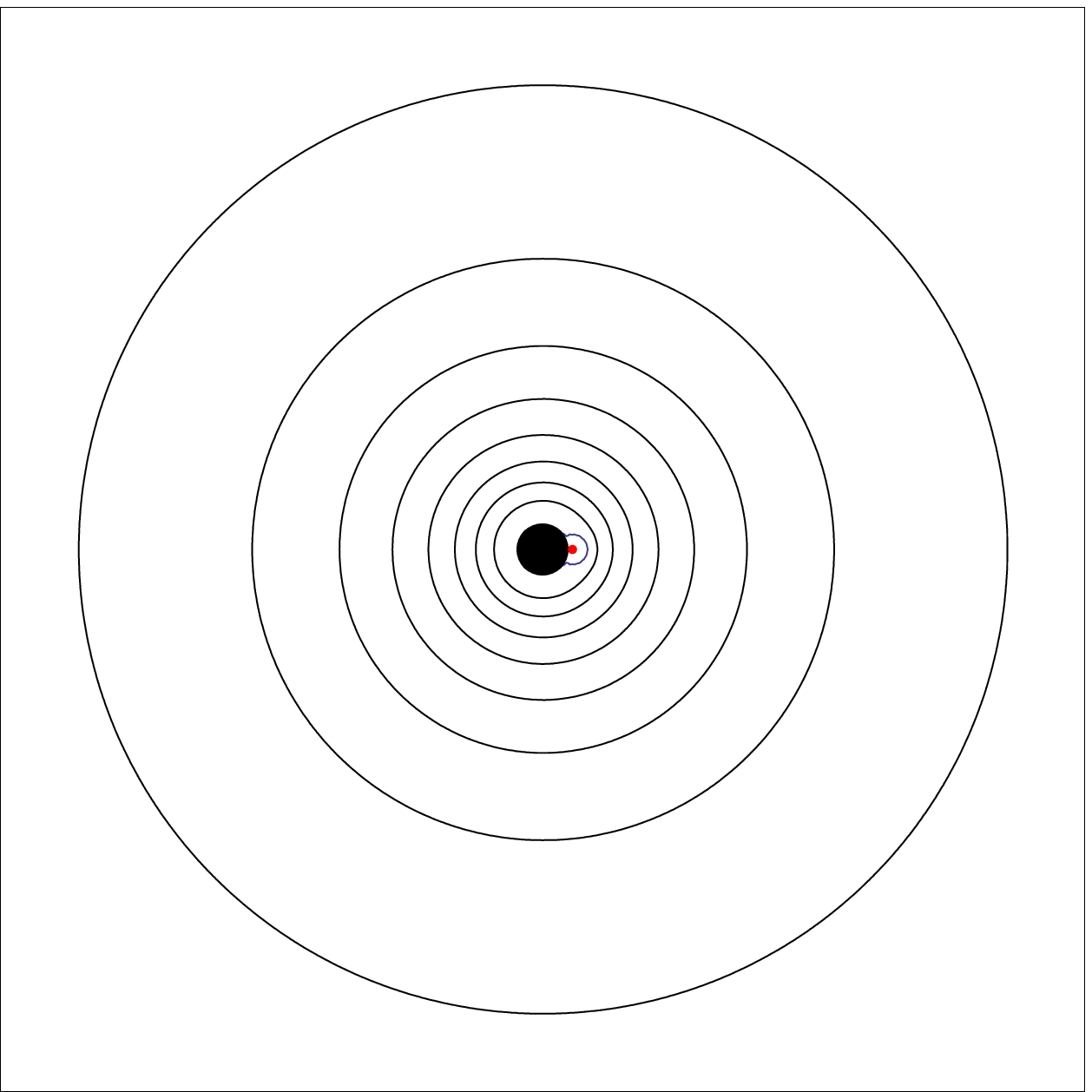}}
\end{figure}
{\centering Figure 3: Plots showing the equipotentials as a point charge is lowered
  into a Schwarzschild black hole in isotropic coordinates. The
  horizon is located at the boundary of the black disc and the point
  charge is red. The blue contour is the equipotential of the horizon.}

\subsubsection{Scalar Charge} \label{scalfield}

We will now show how to treat exactly a static massless scalar field
in the Schwarzschild or Reissner-Nordstr\"om backgrounds. We will once
again make use of the optical metric and the relationship between this
metric and the hyperbolic metric. The main result we shall require is
summarised as
\begin{lemma}
If $g$ is a scalar flat static metric of the form:
\begin{equation}
g = \Omega^2 \gopt = \Omega^2 (-dt^2 + H^{4} h),
\end{equation}
with $h$ the metric on $\Ht$ with radius $1$ and $H$ harmonic on
$\Ht$ then $\Omega$ must satisfy
\begin{equation}
\Delta_h(\Omega H) + \Omega H = 0.
\end{equation}
Further, if $\psi =
\Omega^{-1} H^{-1} \Phi$, then
\begin{equation}
\Box_g \psi = \Omega^{-3} \left[-\frac{1}{H} \pdd{\Phi}{t} +
  \frac{1}{H^5} \left(\Delta_h \Phi + \Phi \right) \right].
\end{equation}
\end{lemma}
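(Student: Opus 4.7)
The plan is to reduce both assertions to the well-known conformal transformation laws for scalar curvature and for the conformal Laplacian (Yamabe operator), applied twice: once for the transformation $g_{opt}=-dt^2+H^4 h$ as a conformal rescaling of the ultrastatic metric $-dt^2+h$, and once for $g=\Omega^2 g_{opt}$. The only non-routine observation needed is that $\mathbb{H}^3$ has scalar curvature $R_h=-6$ and that $g_{opt}$ is ultrastatic, so its scalar curvature equals that of its spatial part.

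For the first assertion, I would start by computing $R_{g_{opt}}$. Since $g_{opt}$ is ultrastatic, $R_{g_{opt}}=R_{h_{opt}}$ where $h_{opt}=H^4 h$. The three-dimensional Yamabe transformation $\tilde h=H^4 h$ gives
\begin{equation}
R_{h_{opt}}=H^{-5}\bigl(-8\Delta_h H + R_h\,H\bigr),
\end{equation}
and because $H$ is harmonic on $\Ht$ and $R_h=-6$ this collapses to $R_{g_{opt}}=-6H^{-4}$. Now the four-dimensional scalar curvature formula for $g=\Omega^2 g_{opt}$ reads $R_g=\Omega^{-3}\bigl(R_{g_{opt}}\Omega-6\,\Box_{g_{opt}}\Omega\bigr)$; setting $R_g=0$ gives $\Box_{g_{opt}}\Omega+H^{-4}\Omega=0$. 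To finish I would compute $\Box_{g_{opt}}\Omega$ for a time-independent $\Omega$: since the spatial Laplacian under $h_{opt}=H^4 h$ satisfies $\Delta_{h_{opt}}f=H^{-4}\Delta_h f+2H^{-5}h^{ij}\partial_i H\,\partial_j f$, harmonicity of $H$ gives the clean identity
\begin{equation}
\Box_{g_{opt}}\Omega=\Delta_{h_{opt}}\Omega=H^{-5}\Delta_h(\Omega H).
\end{equation}
Multiplying the scalar-flat equation through by $H^5$ yields $\Delta_h(\Omega H)+\Omega H=0$, as required.

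For the second assertion I would invoke the conformal invariance of the Yamabe operator in four dimensions: under $g=\Omega^2 g_{opt}$ one has $(\Box_g-\tfrac16 R_g)(\Omega^{-1}u)=\Omega^{-3}(\Box_{g_{opt}}-\tfrac16 R_{g_{opt}})u$. Because $R_g=0$ and $R_{g_{opt}}=-6H^{-4}$, this becomes $\Box_g(\Omega^{-1}u)=\Omega^{-3}(\Box_{g_{opt}}+H^{-4})u$. Setting $u=H^{-1}\Phi$ and using the analogous three-dimensional Yamabe identity for $h_{opt}=H^4 h$, which gives $(\Delta_{h_{opt}}-\tfrac18 R_{h_{opt}})(H^{-1}\Phi)=H^{-5}(\Delta_h-\tfrac18 R_h)\Phi$, I would deduce
\begin{equation}
\Delta_{h_{opt}}(H^{-1}\Phi)=H^{-5}\Delta_h\Phi.
\end{equation}
Combining this with the trivial time-derivative contribution $-H^{-1}\partial_t^2\Phi$ and the $H^{-4}(H^{-1}\Phi)=H^{-5}\Phi$ term from the Yamabe correction yields precisely
\begin{equation}
\Box_g\psi=\Omega^{-3}\Bigl[-\tfrac1H\partial_t^2\Phi+\tfrac1{H^5}(\Delta_h\Phi+\Phi)\Bigr].
\end{equation}

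There is no real obstacle beyond bookkeeping; the only place one must be careful is in the two applications of the Yamabe transformation, making sure the exponents $-5$ (spatial, $n=3$) and $-3$ (spacetime, $n=4$) are consistent with the weights chosen for $\Omega$ and $H$. As a cross-check, one can also verify the identity $\Box_{g_{opt}}\Omega=H^{-5}\Delta_h(\Omega H)$ directly from the Leibniz rule $\Delta_h(\Omega H)=H\Delta_h\Omega+\Omega\Delta_h H+2\nabla_h\Omega\!\cdot\!\nabla_h H$ together with $\Delta_h H=0$, which avoids any appeal to the Yamabe operator on $\Ht$.
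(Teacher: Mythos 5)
Your proposal is correct: the weights and signs in both Yamabe transformations check out (with $h_{\mathrm{opt}}=(H^2)^2h$ the three-dimensional identity indeed carries weights $H^{-1}$ and $H^{-5}$), and the paper's own proof is just the one-line remark that the lemma "follows from standard identities for conformal transformations," which is exactly what you have spelled out. So this is the same approach as the paper, with the bookkeeping made explicit; your direct Leibniz-rule cross-check is essentially the paper's earlier Lemma on conformally related three-dimensional Laplacians.
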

\begin{proof}
This follows from standard identities for conformal transformations.
\end{proof}

We would like to calculate the field $G(x, x_0)$ at $x$ due to a unit static
scalar charge at $x_0$. For a general moving point charge, $G$
satisfies:
\begin{equation}
\Box_g G(x, x'(\tau)) = \int_{-\infty}^\infty \delta_{(g)}(x,
x'(\tau)) d\tau\, ,
\end{equation}
where $\tau$ is the proper time along the worldline $x'(\tau)$ of the
charged particle. Assuming that this particle is static, we find using
the Lemma above that $G(x, x_0) = \Omega^{-1}(x) H^{-1}(x)
\Phi(x,x_0)$ where $\Phi$ satisfies
\begin{equation}
\Delta_h \Phi(x, x_0) + \Phi(x, x_0)  =  H^{-1}(x_0) \delta_{(h)}(x, x_0).
\end{equation}
It is convenient once again to make use of the $SO(3,1)$ invariance of
hyperbolic space in order to solve this equation. We first seek
solutions to the simpler equation
\begin{equation}
\Delta_h \Phi_O(x) + \Phi_O(x)  = \delta_{(h)}(x, O)
\end{equation}
subject to the condition that $\Phi$ and $d \Phi$ are bounded in the
metric induced by $h$ as
$D(x,O) \to \infty$. One finds that the solution is related to the
metric functions for Reissner-Nordstr\"om according to:
\begin{equation}
\Phi_O(x) = \frac{1}{4\pi \sqrt{\mu}} \Omega(x) H(x).
\end{equation}
Unlike in the scalar charged case, there is no arbitrary
constant. Using the identities above, we find that the field due to a
static unit
scalar charge at a point $x_0$ is given by:
\begin{equation}
G(x,x_0) = \frac{\Phi_O(T_{x_0}(x)) }{H(x) H(x_0) \Omega(x)}\, ,
 \label{fulGscal}
\end{equation}
where $T_0$ is defined as in the previous section.

We may once again consider the behaviour of $G(x, x_0)$ as the scalar
charge approaches the horizon. We find that $G$ and its derivatives
fall off like $e^{-D( x_0,O)}$ as $D(x_0, O) \to \infty$. Unlike the
case of an electric charge, there is no residual monopole term, so the black
hole does not become charged. Thus, we see precisely how massless
scalar hair is shed as a point scalar charge is lowered into a black
hole and it it as predicted by our approximate argument given above.

\section{Conclusion}

We have seen how it is possible to make use of the universal asymptotics of the optical metric near a Killing horizon to study physical problems in this region. We have presented a method of studying null geodesics based on the Gauss-Bonnet theorem which directly links the negative curvature of the optical geometry to physical lensing scenarios. We have re-derived classic results about the loss of `hair' as objects fall into a black hole in a simplified manner and by making use of the universality of the near horizon optical metric, extended these results to apply beyond the Schwarzschild case where they were first investigated.

\appendix
\section{Integration on $\mathbb{CP}^1$}

In section \ref{phys} we found that the space of solutions to Dirac's equation on $\mathbb{R}_t\times \Ht$ could be identified with $\mathbb{R}_+ \times \mathbb{CP}^1$ where the $\mathbb{CP}^1$ arose by identifying Weyl spinors which were complex multiples of one another. In subsequent calculations it was necessary to integrate over this space of solutions in a Lorentz invariant fashion. The aim of this appendix is to explain how this is possible.

There are two key observations to be made. Firstly it should be noted that the space of Weyl spinors caries a natural $2$-form defined by:
\begin{equation}
\mu[\chi] = 2 i \epsilon_{\alpha \beta} \chi^\alpha d\chi^\beta \wedge \epsilon_{\dot{\alpha} \dot{\beta}} \bar{\chi}^{\dot{\alpha}} d\bar{\chi}^{\dot{\beta}}.
\end{equation}
This is Lorentz invariant by construction. 

Secondly we may represent $\mathbb{CP}^1$ as a smooth $2$-dimensional surface in $\mathbb{C}^2$, $\Sigma$ where we assume that for almost every $[\chi] \in \mathbb{CP}^1$ there is exactly one point $\tilde{\chi}\in \Sigma$ such that $[\chi] = [\tilde{\chi}]$. Since we are interested in integrating over $\mathbb{CP}^1$ it doesn't matter if this fails to be true for some set of measure zero. Suppose now that we chose a different surface $\Sigma'$. In order that this fulfils the requirements to represent $\mathbb{CP}^1$ there must exist some smooth function $\lambda: \mathbb{C}^2 \to \mathbb{C}$ such that for almost every point $\chi \in \Sigma$, $\lambda(\chi) \chi \in \Sigma'$. In other words we may, by extending the domain of $\phi$ if necessary define a local diffeomorphism
\begin{eqnarray}
\phi : U \subset \mathbb{C}^2 &\to& U' \subset\mathbb{C}^2 \nonumber \\
\chi &\mapsto& \lambda(\chi) \chi
\end{eqnarray}
such that $\phi(\Sigma) = \Sigma'$ up to a set of measure zero. One may verify that
\begin{equation}
\phi^*\mu = \abs{\lambda}^4 \mu.
\end{equation}
Thus if we have a function $f : \mathbb{C}^2 \to \mathbb{C}$ which is a scalar under Lorentz transformations and which satisfies $f(\lambda \chi) = \abs{\lambda}^{-4} f(\chi)$ then the integral
\begin{equation}
\int_\Sigma f \mu
\end{equation}
is independent of which surface in $\mathbb{C}^2$ we use to represent $\mathbb{CP}^1$. Suppose now that $f = f(\chi, X^i)$ where $X^i$ are some vectors in $\mathbb{E}^{3,1}$ and such that
\begin{equation}
f(\rho^s_\Lambda \chi, \rho^v_\Lambda X^i) = f(\chi, X^i),
\end{equation}
where $\rho^s, \rho^v$ are the spinor and vector representations of the Lorentz transformation $\Lambda$ respectively. If we pick a surface $\Sigma$ which represents $\mathbb{CP}^1$, we may define a function
\begin{equation}
I(X^i) = \int_\Sigma f(\chi, X^i) \mu = \int_\Sigma f(\rho^s_\Lambda \chi, \rho^v_\Lambda X^i) \mu.
\end{equation}
If $\phi_\Lambda$ is the function on $\mathbb{C}^2$ defined by left multiplication by $\rho^s_\Lambda$ then we may use the Lorentz invariance of the measure $\mu$ to write
\begin{eqnarray}
I(X^i) &=& \int_\Sigma \phi_\Lambda^*(f(\chi, \rho^v_\Lambda X^i) \mu) = \int_{\phi_\Lambda(\Sigma)} f(\chi, \rho^v_\Lambda  X^i) \mu  \nonumber\\ &=& \int_\Sigma f(\chi, \rho^v_\Lambda X^i) \mu = I(\rho^v_\Lambda X^i),
\end{eqnarray}
where we have made use of the Lorentz invariance of $f$ and $\mu$, together with the independence of the integral on the choice of representative of $\mathbb{CP}^1$. Thus the integral is a Lorentz scalar a fact which we make use of in section \ref{neutf} to calculate the integral (\ref{neutint})

As an example, we may take $\Sigma = \{ (1, z)^t/(1+\abs{z}^2)^{1/2} : z \in \mathbb{C} \}$ which covers all of $\mathbb{CP}^1$ except one point. We find then that
\begin{equation}
\mu |_{{}_\Sigma} = \frac{2 i}{\left(1+\abs{z}^2\right)^2} dz \wedge d\bar{z},
\end{equation}
the standard measure on the sphere under stereographic projection. We use this fact in calculating the neutrino mediated force between electrons.

\end{document}